\newcommand{\prref}[1]{\prettyref{#1}\xspace}
\colorlet{DMnormalbackcolor}{gray!25}
\colorlet{DMlightbackcolor}{gray!10}
\colorlet{DMmediumbackcolor}{gray!20}
\colorlet{DMdarkbackcolor}{gray!30}
\definecolor{blue}{rgb}{0.211,0.211,0.656}
\definecolor{dgreen}{rgb}{0.0,0.4,0.0}
\newcommand{\red}{\color{red}}
\newcommand{\set}[2]{\left\{#1\mathrel{\left|\vphantom{#1}\vphantom{#2}\right.}#2\right\}}
\newcommand{\oneset}[1]{\left\{\mathinner{#1}\right\}}
\newcommand{\smallset}[1]{\left\{\mathinner{#1}\right\}}
\newcommand{\abs}[1]{\left|\mathinner{#1}\right|}
\newcommand{\N}{\mathbb{N}}
\newcommand{\Z}{\mathbb{Z}}
\newcommand{\Gama}{\ensuremath{\Gamma_\alpha}}
\newcommand{\Gamb}{\ensuremath{\Gamma_\beta}}
\newcommand{\Ga}{\ensuremath{G_\alpha}}
\newcommand{\Gb}{\ensuremath{G_\beta}}
\newcommand{\smalloverline}[1]
{{\mspace{1mu}\overline{\mspace{-1mu}#1\mspace{-1mu}}\mspace{1mu}}}
\newcommand{\ov}[1]{\smalloverline{#1}}
\newcommand{\oi}[1]{{#1}^{-1}}
\newcommand{\wh}[1]{\widehat{#1}}
\newcommand{\wt}[1]{\widetilde{#1}}
\newcommand\SL{\mathop\mathrm{SL}}
\newcommand{\IFF}{if and only if\xspace}
\newcommand{\homo}{homomorphism\xspace}
\newcommand{\homos}{homomorphisms\xspace}
\newcommand{\amal}{amalgamated\xspace}
\newcommand{\iso}{iso\-mor\-phism\xspace}
\newcommand{\logs}{logspace\xspace}
\newcommand{\WP}{word problem\xspace}
\newcommand{\CP}{conjugacy problem\xspace}
\newcommand{\mazu}{{M}azur\-kie\-wicz\xspace}
\renewcommand{\phi}{\varphi}
\newcommand{\eps}{\varepsilon}
\newcommand{\alp}{\alpha}
\newcommand\al{\mathop\mathrm{alph}}
\newcommand{\bet}{\beta}
\newcommand{\gam}{\gamma}
\newcommand{\del}{\delta}
\newcommand{\lam}{\lambda}
\newcommand{\Sig}{\Sigma}
\newcommand{\Gam}{\GG}
\newcommand\GG{\Gamma}
\newcommand{\Oh}{\mathcal{O}}
\newcommand{\vdmatrix}[4]{\begin{pmatrix}{#1}&{#2}\\{#3}&{#4}\end{pmatrix}}
\renewenvironment{proof}[1][Proof.]{
\begin{trivlist}
\item[\hskip \labelsep {\bfseries #1}]}{\hspace*{\fill}$\Box$\end{trivlist}
}
\newcommand{\bs}{\backslash}
\newcommand\lds{,\ldots ,} 
\newcommand{\sse}{\subseteq}
\newcommand{\sm}{\setminus}
\newcommand{\os}[1]{\oneset{#1}}
\newenvironment{vd}{\noindent\color{blue} VD }{}
\newenvironment{jk}{\noindent\color{red} JK }{}
\newenvironment{aw}{\noindent\color{magenta} AW }{}
\newenvironment{jl}{\noindent\color{blue} JL : }{}
\begin{document}
\title{Logspace computations in graph products} 
\author{Volker Diekert \and Jonathan Kausch}
\institute{
FMI, Universit\"at Stuttgart \\
Universit{\"a}tsstra{\ss}e 38\\
70569 Stuttgart, Germany \\ [1mm]
\email{\{diekert,kausch\}@fmi.uni-stuttgart.de} 
}

\maketitle

\begin{abstract}
We consider three important and well-studied algorithmic problems in group theory: the word, geodesic, and conjugacy problem.
We show transfer results from individual groups to graph products. 
We concentrate on \logs complexity because the challenge is actually in small complexity classes, only. The most difficult transfer result is for 
the \CP. We have a general result for graph products, but even in the special case of a 
graph group the result is new. Graph groups are closely linked to 
the theory of \mazu traces which form an algebraic model for concurrent processes. Our proofs are  combinatorial and based 
on well-known concepts in trace theory. 
We also use rewriting techniques over traces. For the group-theoretical part we apply Bass-Serre theory. But as we need explicit formulae and 
as we design concrete algorithms all our group-theoretical calculations are completely explicit and accessible to non-specialists. 
\end{abstract}

%{\noindent \bf subjclass.}{F.2.2 Computations on discrete structures,
% ???  Algorithms, ??? Complexity of problems}% mandatory: Please choose ACM 1998 classifications from http://www.acm.org/about/class/ccs98-html . E.g., cite as "F.1.1 Models of Computation". 

%{\noindent \bf keywords.}{Algorithmic group theory, graph products,  \logs complexity, \WP, geodesics, \CP}% mandatory: 

\section{Introduction}
{\noindent \bf Background.} Algorithmic questions concerning finitely generated groups have been studied for more than 100 years starting with the fundamental work 
of Tietze and Dehn in the beginning 
of the 20th century. In this paper we investigate three algorithmic problems for graph products $G$ with a finite and symmetric generating set $\Sig$. The question for us is whether they can be decided in \logs.
\begin{enumerate}
\item {\bf Word problem}. Let $w \in \Sig^*$. Is $w =1 $ in the group $G$?
\item {\bf Geodesic problem}. Let $w \in \Sig^*$. Compute a geodesic, i.e., a shortest word 
representing $w \in G$; and, if a linear order on $\Sig$ is defined, 
compute the lexicographical first word among all geodesics, i.e., compute 
a shortlex normal form of $w$.
\item {\bf Conjugacy problem}. Let $u,v \in \Sig^*$. Are $u$ and $v$ conjugated in $G$?
\end{enumerate}
The complexity of the first and third problem depends on $G$ only, whereas for the second problem we have to specify $\Sig\sse G$, too. Over the past few decades the search and design of algorithms for 
decision problems like the ones above has developed into 
an active research area, where algebraic methods and  computer science techniques  join in a
fruitful way, see e.g.~the recent surveys \cite{Lohrey2012survey,Sapir2011BullMath}. Of particular interest are those problems
which can be solved efficiently in parallel. % (hence below {\mathsf P}). 
More precisely, we are interested in {\em deterministic logspace}, called simply 
{\em  \logs} in the following. This is a complexity class at  the lower level in the 
$\mathsf{NC}$-hierarchy% of parallel complexity classes:
\footnote{$\mathsf{NC}^i$ is the  class of languages which are  accepted by (uniform) boolean circuits
of polynomial size, depth $O(\log^i(n))$ and constant fan-in, see e.g.~\cite{Vollmer99} for a textboook.}:
\begin{equation}\label{eq:nc}
\mathsf{NC}^1 \subseteq \mathsf {\logs} %\mathsf{LOGSPACE}
 \subseteq {\mathsf {LOGCFL}}\subseteq
\mathsf{NC}^2 \subseteq \mathsf{NC}^3
\subseteq \cdots \subseteq \mathsf{NC} = \bigcup_{i \geq 1} \mathsf{NC}^i \subseteq \mathsf{P}\subseteq \mathsf{NP}
\end{equation}
No separation result between $
\mathsf{NC}^1$ and $\mathsf{NP}$ is known but it is believed (by some)
that all of the above inclusions in (\ref{eq:nc}) are strict. 
A fundamental result
in the context of group-theoretical algorithms
was  shown by  Lipton, Zalcstein and Simon
in \cite{lz77,Sim79}: The word problem of finitely generated linear groups 
belongs to logspace. 
The class of groups with a word problem in
logspace is further investigated in \cite{Waack81}.   
Another important result due to Cai (resp.~Lohrey) is that the word problem of hyperbolic groups is in  $\mathsf{NC}^2$ \cite{cai92stoc} (resp.~in {\sf LOGCFL} by \cite{Lo05ijfcs}). The class {\sf LOGCFL} coincides with the (uniform) class 
${\sf SAC}^1$. It is a subclass of $\mathsf{NC}^2$.
Often, it is not enough to solve the word problem, but one has to compute a normal form. This leads to the problem of computing geodesics. This problem and 
various related problems were studied e.g.~in 
\cite{DLS,elder2010,ElRe2010,MyRoUsVe08,PRaz}. These results imply that there are groups with 
an easy word problem (in logspace), but where simple questions related to 
geodesics are computationally hard, for example $\mathsf{NP}$-complete for certain wreath products 
or free metabelian group of rank 2.
%For example, it is shown in \cite{DLS}, that the question whether a given element 
%of the wreath product $\Z/ 2 \Z \wr (\mathbb{Z} \times \mathbb{Z})$ (a metabelian group)
%has geodesic length of a certain bounded length is $\mathsf{NP}$-complete.
%A corresponding result was shown in \cite{MyRoUsVe08} for the free metabelian group of rank 2. 
Finally, the \CP is a classical decision problem which is notoriously more difficult than the \WP. 
Whereas for a wide range of groups the \WP is decidable (and often easily decidable) the conjugacy problem is not known to be decidable. 
This includes e.g.{} automatic groups (\WP is in $\Oh(n^2)$) or one-relator groups (\WP is decidable) to mention two classes. 
Miller's group \cite{Miller1} has a decidable word problem (at most cubic time\footnote{Mark Sapir, personal communication}, actually logspace) but undecidable \CP. Actually, there are finitely generated subgroups of $F_2 \times F_2$ (hence subgroups of  $\SL(4,\Z)$, hence linear groups) with unsolvable conjugacy problem \cite[Thm.{} 5.2]{Miller}.
% Sapir and Ol'shanski constructed a group with a Dehn-function bounded by $\Oh(n^2\log n)$ but with undecidable \CP, see \cite[Thm.~4.11]{Sapir2011BullMath}. Thus, it is interesting and non-trivial to find group-theoretical constructions which preserve the decidability (resp.~complexity) of  the word, geodesic and \CP.

Here, we continue and generalize the work of  \cite{dkl12conm} from graph groups to graph products of groups having a \WP in \logs.  %}
We show transfer results for all three problems mentioned above. However, techniques of \cite{dkl12conm} for graph groups
(which used linear representations for right-angled Coxeter groups) 
are not  available in the present paper, simply because linear representations do not exist for the individual groups, in general. 
For graph products we start with a list $L$ of groups $\Ga$. % where $\alp \in L$ belongs to a fixed finite set. 
Next, we endow $L$ with an irreflexive and symmetric relation $I\sse L \times L$. This means $(L,I)$ is a finite undirected graph and each node $\alp \in L$ is associated with a node group $\Ga$.
The graph product $G$ is then the free product of the $\Ga$'s modulo defining relations $gh = hg$ for all $g\in \Ga$ and  $h\in \Gb$
where $(\alp, \bet) \in I$. Thus, it is a free product with partial commutation. 
If $I$ is empty then $G$ is the free product
$\star_{\alp \in L}\Ga$. If $(L,I)$ is a complete graph then 
$G$ is the direct product
$\prod_{\alp \in L}\Ga$.
% In general, $G$ sits ``between'' the free and direct product, and sitting in between makes the situation  difficult.
Our setting includes the important special case where all node groups are isomorphic to $\Z$. This is exactly the case when $G$ is 
\emph{free partially commutative}. These groups are also known as 
\emph{graph groups} (see \cite{dro85}) or \emph{right-angled Artin groups} (RAAGs). If all node groups are isomorphic to $\Z/ 2 \Z$ then we  
obtain a \emph{right-angled Coxeter group}. 
Graph groups  embed into right-angled Coxeter groups in a canonical way \cite{hw99}, and Coxeter groups are known to be linear. % \cite{bjofra05}.  
Hence,
 graph  groups and Coxeter groups have a \WP in \logs. Graph groups received in recent years a lot of attention in group
theory because of their rich subgroup structure \cite{BesBr97,CrWi04%,GhPe07
}. On the algorithmic 
side, (un)decidability results were obtained for many important
decision problems in graph groups 
\cite{CrGoWi09,dm06}. The theory of free partially commutative groups is also directly linked to the theory of \mazu traces which is important in computer science since it yields an algebraic framework for concurrent systems \cite{kel73,maz77,dr95}.  
\medskip 

{\noindent \bf Results.} Our achievement has a strikingly simple formulation: 
\noindent If the word problem (geodesic problem resp., \CP resp.) of all node groups is in \logs then the same is true for the graph product. 

\medskip
\noindent An analogous assertion holds for various other complexity classes  closed under \logs reductions like $\mathsf{NC}, \mathsf{P}$ or $\mathsf{NP}$ by similar arguments as used in this paper. We treat ``\logs'' because it  concerns the smallest natural complexity class where we can assert such a statement  because the 
%. Indeed, already the free product $\Z/2 \Z\star \Z/2 \Z$ contains a non-abelian free group.
%The 
\WP of non-abelian free groups  has to be solved, which  is $\mathsf{NC}^1$-hard by \cite{Robinson93phd}; and  
the best known upper bound is \logs. So, it is possible that 
\logs is truly the smallest class in all non-trivial cases.

Our results with respect to the \WP generalize in particular \cite[Prop.~19]{ElderEO13} solving thereby an open problem.  
A transfer result with respect to the \WP was known before for free products \cite{Waack81}, but unknown for graph products, in general. 
For a compressed variant of the word problem, it is known that polynomial time decidability is preserved by graph products \cite{HauboldLM12}.
%Let us also emphasize that our result applies  to graph products of linear groups. 
Our results here imply that the \WP of a graph products of linear groups is solvable in \logs. 
It is still open whether a graph product of linear groups is linear again.
The results here support a positive answer to this question asked in \cite{hw99}; but not much beyond the classical result of \cite{wehrfritz73lms} is known.
Our method also yields a \logs-reduction  of the \CP for graph products of linear groups 
to the \CP in the node groups. 
This is somewhat the best we can expect because, as we mentioned above, there  
are finitely generated linear groups with unsolvable conjugacy problem.

Our proof is inductive on the number of nodes $\alp \in L$ and the algebraic description of a graph product as a certain amalgamated product. Amalgamated products are basic components in Bass-Serre theory\footnote{Bass-Serre theory is a cornerstone in modern combinatorial group theory. It  showed us the direction to the proof, but the abstract theory does not give  complexity results, directly.} 
\cite{serre80}; and indeed, our  proof of  \prref{cor:fitz}is an application of explicit Bass-Serre theory. The proof is still technical  and not easy. On the positive side we had to make all calculations explicit.
Thus, no a priori knowledge in Bass-Serre theory is necessary for understanding the \logs solution of the \WP in graph products. 

\section{Notation}
{\bf Words.} An \emph{alphabet} is a set (with a linear order) and  its elements 
are called \emph{letters}. By $\Sig^*$ 
we denote the free monoid over $\Sig$ and its elements are called \emph{words}. For a word $w\in \Sig^*$ we denote by $\abs w$ its \emph{length}
and if $a \in \Sig$, then ${\abs w}_a$ counts how often the letter $a$ appears in $w$. Thus, ${\abs w} = \sum_{a \in \Sig}{\abs w}_a$. 
By $\al(w) = \set{a \in \Sig}{{\abs w}_a \geq 1}$ we denote the 
\emph{alphabet} of $w$. The \emph{empty word} has length $0$; 
and it is denoted by 1 as other neutral elements in monoids or groups.
In the paper, $L$ is a finite  list (with a linear order) and 
$\Sig$, $\Sig_\alp$, $\Gam$,  $\GG_\alp$ denote alphabets.
We have $\Sig_\alp\sse \GG_\alp$, $\Sig = \bigcup_{\alp \in L}\Sig_\alp$
and $\Sig$ is finite, $\GG = \bigcup_{\alp \in L}\GG_\alp$ and 
$\GG$ is typically infinite. 
All alphabets are endowed with an \emph{involution}. This is a mapping 
$x\mapsto \ov x$ such that $\ov{\ov x} = x$. The involution is  extended 
to words by $\ov{a_1 \cdots a_n} = \ov{a_n} \cdots \ov{a_1}$ where 
$a_i$ are letters. For a group $G$ the involution is here always defined
by taking the inverse, i.e., $\ov g = \oi {g}$ for $g \in G$. 
As we represent group elements by words, we prefer the notation
$\ov g$ rather than $\oi {g}$ for group elements, too. 

{\noindent \bf Groups.} Our frame  is given by groups $\Ga$ (for $\alp$ in the list $ L$) which are assumed to be generated by some finite subset $\Sig_{\alp}$ with $\Sig_{\alp} = \Sig_{\alp}^{-1} \sse  \Ga\sm \os{1}$.
Moreover we define an alphabet $\GG_{\alp} =  \Ga\sm \os {1}$. Hence $\GG_{\alp}$ is infinite, in general. 
This means there is a natural surjective \homo from $\Sig_{\alp}^*$ onto $\Ga$ which respects the involution. Moreover, every letter  $a\in \GG_{\alp}$ can be represented by a word $w_a \in \Sig_{\alp}^*$
such that $w_a = a$ in $\Ga$. 

{\noindent \bf Graphs.} Here, graphs are without self-loops and multiple edges. 
They are node-labeled. The undirected graphs specify the ``independence''  relation. Directed graphs specify  ``dependence graphs'' which are used to represent group elements in graph products. 
We say that graphs are \emph{identical}, if they are isomorphic as node-labeled (directed) graphs. Thus, graphs are viewed as  ``abstract graphs''.

{\noindent \bf Complexity.} We use standard notation from complexity theory, \cite{papa,Vollmer99}. 
In particular, we use the result that the composition of \logs computable functions is \logs computable.
A function $f$ is computable in \logs if it is computable by some %\emph{\logs transducer}; this is a
deterministic Turing machine such that the work tape is bounded by $\Oh(\log n)$ where $n$ denotes the input length. The output length 
is then bounded by some polynomial in $n$ and every \logs-computable
function is computable in $\mathsf{P}$, i.e. deterministic polynomial time. 

%%%%%%%%%%%%%%%%%%%%%%%%%%%%%%%%%%%%%%%
\section{Word problem in certain \amal products}\label{sec:amal}
Our results concern graph products. The results in this section serve 
as a tool during an induction process. They are slightly more general than needed there. 
The situation in this section is as follows. 
We consider finitely generated groups $A, B, P$ such that $A \leq P$ is a subgroup of $P$ and we let 
$G = P \star_A (A \times B)$ be the \amal product over $A$. 
The identity on $P$ and the projection of $A \times B$ onto  $A$ induce a projection $\pi: G \to P$. Let $H$ be the kernel 
 of $\pi$, then we have a short exact sequence
 $$1 \to H \to G \overset{\pi}{\to} P \to 1.$$
 Moreover, since $\pi(a) = a$ for $a \in A$, the \homo $\pi$ and the identity on $A \times B$
 induce a \homo of $G$ onto $P\times B$.  Thus, a necessary condition for an element to be $1$ in $G$ is that its image in $P\times B$ is $1$. 
 
\begin{theorem}\label{thm:wpinamal}
Assume that the \WP of $P$ and $B$ can be solved in \logs and that 
the membership problem for $A$ in $P$ can be solved in \logs, too. 
Then the \WP of $G$ can be solved in \logs.
\end{theorem}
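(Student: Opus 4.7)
The strategy is to reduce ``$w = 1$ in $G$?'' to two logspace-decidable subproblems: testing $w \in H := \ker\pi$, and, if so, deciding triviality of $w$ inside $H$. The key structural observation is that $H$ is isomorphic to a free product of copies of $B$, one per coset of $A$ in $P$.

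Given $w \in \Sig^*$, parse it into maximal alternating blocks $w = p_0 b_1 p_1 b_2 \cdots b_m p_m$, with each $p_i$ a (possibly empty) word over the generators of $P$ (which, by assumption, include those of $A$) and each $b_i$ a non-empty word over the generators of $B$; the parse is trivially in \logs. Compute $\pi(w) = p_0 p_1 \cdots p_m$ by concatenation and test whether $\pi(w) = 1$ in $P$ using the logspace \WP of $P$; reject on failure.

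Assuming $\pi(w) = 1$, set $q_i := p_0 \cdots p_i$. Telescoping via $q_{i-1} b_i = (q_{i-1} b_i q_{i-1}^{-1})\, q_{i-1}$, and using $q_m = 1$ in $P$, rewrites
\[
w \;=\; \prod_{i=1}^{m} q_{i-1}\, b_i\, q_{i-1}^{-1} \quad \text{in } G.
\]
Since $A$ commutes elementwise with $B$ in $A \times B$, the conjugate $q_{i-1} b_i q_{i-1}^{-1}$ depends only on the coset $c_{i-1} := q_{i-1} A \in P/A$. A Bass--Serre argument applied to the amalgam $G = P \star_A (A \times B)$ shows that the $H$-action on its Bass--Serre tree has trivial edge stabilizers, trivial stabilizers at $P$-type vertices, and conjugates of $B$ at $(A \times B)$-type vertices; the resulting quotient graph of groups is a tree with trivial edges, giving $H \cong \bigast_{c \in P/A} B$. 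Under this isomorphism, $w$ corresponds to the labeled sequence $(b_1, c_0)(b_2, c_1) \cdots (b_m, c_{m-1})$.

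It remains to decide triviality of this sequence in the free product. Coset equality $c_i = c_j$ reduces to the test ``$q_i^{-1} q_j \in A$?'', a membership query that is logspace by hypothesis and operates on a word assembled in \logs from the parsed $P$-blocks. Combined with the logspace \WP of $B$, the \WP of this labeled free product is in \logs via a Dyck-style reduction that iteratively merges adjacent same-label factors and deletes those that the \WP of $B$ identifies as trivial; this is essentially the free-product case of Waack~\cite{Waack81}. Composing the three \logs stages yields the algorithm. The main obstacle is twofold: establishing $H \cong \bigast_{P/A} B$, cleanest via Bass--Serre but also possible by iterated rewriting using a transversal for $A$ in $P$; and realizing the free-product reduction in \logs with only the coset-equality oracle rather than explicit coset names.
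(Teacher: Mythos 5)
Your overall architecture matches the paper's: project to $P$ and test $\pi(w)=1$, telescope $w$ into $\prod_i q_{i-1}b_i\ov{q_{i-1}}$, identify the conjugating cosets via the \logs membership oracle for $A$ in $P$, and recognize $H$ as a free product of copies of $B$ indexed by cosets of $A$. (Explicit coset names are not actually an obstacle: the paper simply takes $\nu(i)$ to be the minimal index $j$ with $\ov{q_j}\,q_i\in A$, a \logs computation from the membership oracle.) The genuine gap is in your last step, which is also the crux of the whole theorem: deciding in \logs whether the resulting word is trivial in the free product $\star_{\nu\in N}B^{(\nu)}$. The procedure you describe --- iteratively merging adjacent same-label factors and deleting those that are trivial in $B$ --- is not a \logs algorithm: each pass rewrites the word, the number of passes can be linear in $m$, and storing intermediate words needs linear space. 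This is exactly the difficulty the paper's proof is built to circumvent. Your appeal to Waack is also not immediate, because Waack's transfer theorem concerns a free product of a \emph{fixed} finite family of groups, whereas here the number of free factors depends on the input ($\abs{N}\le m$, as the paper notes in a footnote); to use it as a black box you would additionally have to embed $\star_{\nu\in N}B^{(\nu)}$ into a fixed group such as $B\star\Z$ via $b^{(\nu)}\mapsto t^{\nu}bt^{-\nu}$, which you do not do.

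The paper resolves this final step differently and fully explicitly. It applies one more projection $\phi:\star_{\nu}B^{(\nu)}\to B$, checks $b_1\cdots b_m=1$ using the \logs \WP of $B$, and thereby places the element in the kernel $K=\ker\phi$, which is free. It then computes in \logs an explicit finite set $X$ of triples forming a basis of a free subgroup of $K$ containing the element (the injectivity of $\xi:F(X)\to K$ is the technical Lemma doing the real work), encodes $F(X)$ into $F(x,y)$ and then into $\SL(2,\Z)$, and evaluates the resulting matrix product by Chinese remaindering as in Lipton--Zalcstein. If you want to keep your route, you must either carry out the embedding into $B\star\Z$ and cite Waack properly, or replace the iterative cancellation by a genuinely logspace procedure; as written, the final stage does not establish the \logs bound.
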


\begin{proof}
 Let $w = g_0b_1 g_1 \cdots b_m g_m$ be a word 
 with $g_i \in P$ and $b_j \in B$. We want to decide whether 
 $w = 1 \in G$. In a first step, we simply compute 
 $\pi(w) = g_0 g_1 \cdots g_m \in P$ and we check whether 
 $\pi(w) =1$. This can be done in \logs and for the rest of the proof we may assume $\pi(w) =1$ (because otherwise $w \neq 1 \in G$) and hence 
 we have $w \in H$. Moreover, we may also assume that 
 $ b_1 \cdots b_m = 1\in B$.
 
 The structure of $H$ is well understood by Bass-Serre theory.
 The group $H$ is a free product of groups $ B^p = pB p^{-1}$ for certain $p \in P$. 
 For those readers who are familiar with Bass-Serre theory let us note that $H$  is the fundamental group of a  graph of groups which is a ``star'' with a  trivial group in the  center:

 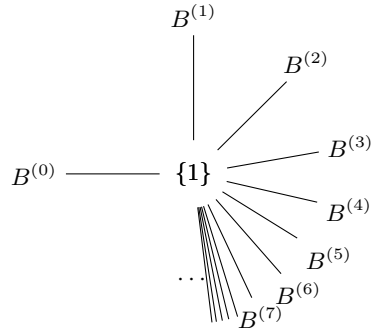
\begin{figure}[ht] 
 \begin{center}
\begin{tikzpicture}[scale=0.7]
\newcommand{\rad}{3}
\node[circle] (M) at (0,0 ) {$\smallset{1}$};
\node[circle] (L) at (0,0 ) {$\smallset{1}$};
\node (0) at (180:\rad) {$B^{(0)}$};
\node (1) at (90:\rad) {$B^{(1)}$};
\node (2) at (45:\rad) {$B^{(2)}$};
\node (3) at (10:\rad) {$B^{(3)}$};
\node (4) at (-13:\rad) {$B^{(4)}$};
\node (5) at (-32:\rad) {$B^{(5)}$};
\node (6) at (-49:\rad) {$B^{(6)}$};
 \node (7) at (-65:\rad) {$B^{(7)}$};
  \node (8) at (-73:\rad) {};
   \node (9) at (-76.5:\rad) {};
    \node (10) at (-79:\rad) {};
        \node (11) at (-81.3:\rad) {};
         \node (12) at (-83:\rad) {};
  \draw (0) --(M);
 \draw (1) --(M);
   \draw (2) --(M);
 \draw (3) --(M);
   \draw (4) --(M);
 \draw (5) --(M);
   \draw (6) --(M);
 \draw (7) --(M);
   \draw (8) --(M);
     \draw (9) --(M);
     \draw (10) --(M);
         \draw (11) --(M);
          \draw (12) --(M);
     \node at (270:\rad*2/3) {$\cdots$};
 \end{tikzpicture}
\end{center}
\caption{Star with $[P:A]$ rays, because $P \leq G$ induces a bijection
 $ P / A = H \bs G / (A\times B)$. }
\end{figure}

However, knowing the structure of $H$ is not enough, since we must be able to 
compute in \logs an effective  representation of $w$ in the free product. Moreover, $H$ is not finitely generated, in general. (This happens if $B$ is non-trivial and 
the index $[P:A]$ is infinite, which is the case of interest). So, instead of using Bass-Serre theory as a black box, we 
take a more elementary approach. 

Let $N$ be an index set and $\set{p_{\nu}\in P}{\nu \in N}$ 
be a subset of $P$ such that $\ov{p_{\mu}}p_{\nu} \notin A$ 
for all $\mu \neq \nu$. This means $\set{p_{\nu}A}{\nu \in N}$
is a set of pairwise disjoint cosets.  For each $\nu \in N$ we let 
$B^{(\nu)}$ be a copy of the group $B$. For $b^{(\nu)} \in B^{(\nu)}$
we let $b$ be the corresponding group element in $B$. 
Let $\psi_\nu: B^{(\nu)} \to H$ be the injective \homo defined by
$\psi_\nu(b^{(\nu)})= \ov {p_{\nu}}b p_{\nu}$. 
This induces a \homo $\psi: \star_{\nu \in N}B^{(\nu)} \to H$. Now, we have $H \leq G$ and since 
 $\ov{p_{\mu}}p_{\nu} \notin A$ 
for all $\mu \neq \nu$, a standard argument for \amal products 
shows that $\psi$ is injective.

Remember that we have $  w = g_0b_1 g_1 \cdots b_m g_m \in H$ 
with $g_i \in P$ and $b_j \in B$. Define and compute  $p_{i}= g_0 \cdots g_i \in P$ for $0 \leq i < m$. Then we have 
$w = p_0b_1 \ov{p_0} p_1b_2 \ov{p_1} \cdots p_{m-1}b_m \ov{p_{m-1}}$ because $\ov{p_{m-1}} = g_m$.
Thus, we can compute in \logs  for each $i$ the minimal index $\nu(i) \in \os{0 \lds m-1}$ 
   such that 
 $\ov{p_{\nu(i)}}\, p_i \in A$. Here we use that the membership problem for $A$ in $P$ is computable in \logs. 
Define  $N= \set{\nu(i)}{0 \leq i < m}$ and the \homo 
$\psi_\nu: B^{(\nu)} \to H$ as above. We obtain\footnote{What we have done so far is ``essentially''  a \logs reduction from the \WP in $G$ to the \WP in some free product $\star_{\nu \in N}B^{(\nu)}$. It is not a \logs reduction  in the literal sense, because $N$ depends on the input word, but on the positive side $\abs N \leq m$.} 
$w =  \psi(b_1^{(\nu_1) } \cdots b_m^{(\nu_m)})$ with $\nu_i = \nu(i)$.

Next, consider the homomorphism $\phi : \star_{\nu \in N} B^{(\nu)} \to  B$ where
$\phi(b^{(\nu)}) = b$.  We have  $\phi(\psi^{-1}(w))  = \phi(b_1^{(\nu_1) } \cdots b_m^{(\nu_m)}) = b_1 \cdots b_m\in B$. Recall that $ b_1 \cdots b_m= 1$, hence $\psi^{-1}(w)\in K$, where
$K = \text{ker}(\phi)$ denotes the kernel of $\phi$. 
Bass-Serre theory tells us that $K$ is free,  but we need to find and rewrite $\psi^{-1}(w)$ in some basis $X$ of some finitely generated free subgroup 
$F(X)\leq K$, so we do the explicit calculations.  

For simplicity of notation we may assume that the input word $w$ is written as $w = b{}_1^{(\nu_1)} \cdots b{}_m^{(\nu_m)}\in K$ with $m \geq 1$ and  $1 \neq  b{}_i\in B^{(\nu(i))}$. Since  $w \in K$, we have $m \geq 2$. We mimic what we have done above; and we define
$g_i^{(\ell)} = ({b{}_{1}}\cdots {b{}_{i}})^{(\ell)}\in B^{(\ell)}$.
In particular, $g_1^{(\ell)}= b_1^{(\ell)}$ and $g_m^{(\ell)}= 1$.
For each  $1\leq i <m$, consider the factor 
$b_{i}^{(k)}b_{i+1}^{(\ell)}$ of $w$ with $k = \nu_i$ and $\ell = \nu_{i+1}.$
 Replace $b_{i}^{(k)}b_{i+1}^{(\ell)}$ by  
$$b_{i}^{(k)}\; (\ov {b_{i}}^{(\ell)}\cdots  \ov {b{}_{1}}^{(\ell)})( {b{}_{1}}^{(\ell)}\cdots {b{}_{i}}^{(\ell)}) \, b{}_{i+1}^{(\ell)} = b{}_{i}^{(k)}\; \ov{g_{i}}^{(\ell)} g_{i+1}^{(\ell)}.$$
The input word $w$ becomes  (after this \logs procedure) a word of the form 
$w= g_1^{(\nu_1)} \ov{g_1}^{(\nu_2)}\, g_2^{(\nu_2)} \ov{g_2}^{(\nu_3)}\cdots g_{m-1}^{(\nu_{m-1})} \ov{g_{m-1}}^{(\nu_m)}$ with $g_i^{(\nu_i)} \ov{g_i}^{(\nu_{i+1})}\in K$.
We use the notation $ (i,g,j) = g^{(i)}\ov g^{(j)}\in K$ and we rewrite 
$w$ as a product over these triples with $1 \leq i,j <m $ and 
$g \in B$. The  triples define a subset of $K$ of size less than  $m^3$. 
 We have $(i,g,j)^{-1} = (j,g,i) \in K$.  But the set of $(i,g,j)$ is not a basis of $K$
since e.g., 
$(i,g,k)(k,g,j)= (i,g,j).$ In particular, we have $ (i,g,j) = (i,g,0)(0,g,j)$ for all  $ i ,j$. The next \logs computation rewrites 
$ w$ as a product in triples  $ (i,g,j) = g^{(i)}\ov g^{(j)}$
such that $ 1 \neq g \in B$ and $ i \neq j$, 
$ g^{(i)} \in B^{(i)}$ and  $ \ov g^{(j)} \in B^{(j)}$, 
 $ \phi(g^{(i)}) = g$  and $ \phi(\ov g^{(j)}) = g^{-1}$. 
Thus, we find in \logs a smallest set  $X = \set{(i,g,0)}{i \neq 0, g \neq 1}$ 
and a word $u \in (X \cup \ov X)^*$  such that $w = \xi(u)$
for the mapping $\xi: X \to K$, $\xi(i,g,0) = g^{(i)}\ov g^{(0)}$. 
The set $X$ has at most $m^2$  elements and 
the word $u$ can be viewed as an element in the free
 group $F(X)$.
 Standard \logs-computable encodings 
 embed $F(X)$ into the free group $F(x,y)$ with two generators. For example the 
 $i$-th generator in $x$ can be mapped to $x^i y x^{-i}$. 
 Another standard \logs-computable encoding embeds $F(x,y)$ into the special linear group $\SL(2,\Z)$ of 
 $2 \times 2$ integer matrices\footnote{For example, $x\mapsto \vdmatrix{0}{1}{-1}{-2}$ and $y \mapsto\vdmatrix{2}{-1}{1}{0}$.}. We can evaluate the matrix corresponding to $u$ in \logs by the Chinese remainder theorem. If the 
 evaluation is the identity matrix then we have $w=1$  (see \cite{lz77} for details).
 Thus, we may assume that the matrix is not the identity. Hence,
 $1 \neq u \in F(X)$. But this is not enough, we have to show that $1 \neq u \in F(X)$
 implies $w \neq 1$.
  The assertion  $\xi(u) = w \neq 1$  follows from the following lemma. %\prref{lem:otto}.
  Thus, this lemma finishes the  proof of \prref{thm:wpinamal}.
  \end{proof}

\begin{lemma*}%\label{lem:otto} 
The \homo $\xi: F(X) \to K$ is injective. This means $X$ forms a basis 
of a free subgroup of $K$ containing the element $w$.
\end{lemma*}

\begin{proof}We know $\xi(u) = w$. 
 Consider now any non-empty freely reduced word $u$ in $(X \cup \ov X)^*$
and let $\xi(u)$ be its image in  $K$.
We have to show that $\xi(u)\neq 1$. We can write  
$u= v\, (i,g,j)$, where  $v\in (X \cup \ov X)^*$ is a freely reduced word
and $(i,g,j) \in (X \cup \ov X)$. We show: 
\begin{itemize}
\item The last factor of $\xi(u)$ in the free product $\star_{\nu \in N} B^{(\nu)}$
is $\ov g^{(j)}$.
\item If $j=0$, then the last two factors of $\xi(u)$ are  $ h^{(i)}\ov g^{(0)}$ for some $ h\in B$.
\end{itemize}

For $ u = (i,g,j)$ we have $\xi(u)= g^{(i)}\ov g^{(j)}$ as desired. Hence, $v$ is not empty and we can write 
$u = v' (k,f,\ell) (i,g,j)$. By induction the last factor of 
$ \xi(v)$  is $\ov f^{(\ell)}$. 

For $\ell \neq i$ we conclude that the last three factors of $\xi(u)$ are  
$\ov f^{(\ell)} g^{(i)}\ov g^{(j)}$. Hence, we may assume that $\ell = i$. Therefore
$u = v' (k,f,i) (i,g,j)$.

For $f \neq g$ the last two factors of $\xi(u)$ are $(\ov fg)^{(i)}\ov g^{(j)}$.

Now, assume $f =g$, then we must have $k \neq j$ because $u$ is freely reduced. But then we must have $i= \ell =0$.
Therefore, 
$u = v' (k,g,0) (0,g,j)$ with $k \neq j$. 
By induction, the last two factors of $\xi(v)$ are $h^{(k)} \ov g^{(0)}$.
Hence, the last two factors of $\xi(u)$ are $h^{(k)} \ov g^{(j)}$.
In particular, $\xi(u)\neq 1$.
\end{proof}

\section{Graph products}\label{sec:grapro}
A graph product over groups is defined by the following data. There is a finite list $L$  and for each 
$\alp \in L$ there is an associated finitely generated non-trivial group $\Ga$.
In addition, there is an irreflexive symmetric relation $I \sse L \times L$, 
which is called  an \emph{independence relation}. This means, $(L,I; (\Ga)_{\alp\in L}))$ is a node-labeled undirected graph. The \emph{graph product } $G= G(L,I; (\Ga)_{\alp\in L}))$ is defined as the quotient group of the free 
product $ \star_{\alp \in L}G_\alp$ with defining relations 
${ g_\alp h_\bet =h_\bet g_\alp } \text{ for all }{ g_\alp\in G_\alp,\, h_\bet\in G_\bet },\,  { (\alp, \bet)\in I}.$ If all $\Ga$ are finitely presented, then the graph product $G$ is finitely presented, too. If the 
independence relation $I$ is empty, then $G$ is a free product. If
$(L,I)$ is a complete graph then $G$ is a direct product. The 
\emph{universal property} of $G$ is that a \homo of $G$ to another group 
$G'$ is given by a family of \homos $h_\alp:\Ga \to G'$ 
such that $h_\alp(x) h_\bet(y)= h_\bet(y)h_\alp(x)$ for all 
$(x,y) \in \Ga \times \Gb$ where $(\alp,\bet)\in I$. 

Graph products have an algebraic decomposition as in \prref{sec:amal}.
Start with any ``base'' node $\bet \in L$ and let 
$B = \Gb$. Consider the subgraph $(L',I')$ which is induced by $L \sm \os \bet$. This yields a corresponding graph product 
$P$. The \emph{link} of $\bet$ is the subgraph which is induced 
by the set of nodes $\alp \in L'$ where $(\alp, \bet) \in I$. 
Let $A$ be the graph product corresponding to the link of $\bet$. 
Then $A$ is a subgroup of $P$ and $A \times B$ is a subgroup of $G$.

\begin{example}\label{ex:gp}
Consider a graph product $G$ as depicted as follows. \begin{center} 
\begin{tikzpicture}[node distance=20mm]
 \node[circle] (a4) {$ \gam$};
 \node[circle] (a3) [below left of=a4] {$ \del$};
 \node[circle] (a2) [above left of=a4] {$ \alp$};

 \node[circle] (a5) [below right of=a4] {$ \eta$};
 \node[circle] (a6) [above right of=a4] {$ \bet$};
 
 \node[circle] (a1) [below left  of=a2] {  {\!\!\!\!\!\!\!\! graph product over} };
\node[circle] (a0) [left  of=a1] {$ G =  $};

 \draw (a2)  edge[-,thick,dgreen] (a3); %link
 \draw (a2)  edge[-,thick, red] (a6);
 \draw (a3)  edge[-,thick] (a4); %Hasse
 \draw (a3)  edge[-,thick] (a5);
 \draw (a4)  edge[-,thick] (a5); %Hasse
 \draw (a4)  edge[-,thick, red] (a6); %Hasse
\end{tikzpicture}
\end{center}
The link of $\bet$ is $\os{\alp, \gam}$ and $A$ is the free product $ A = G_\alp \star G_\gam$. Removing the node $\bet$ 
we obtain a smaller graph and the link of $\alp$ becomes the singleton 
$\os {\del}$. Removing $\alp$ leaves us with a triangle with 
nodes $\gam$,  $\del$, $\eta$ which yields the direct product 
$G_\gam \times G_\del\times G_\eta$. Going backwards we see that 
$P$ is the amalgamated product $P= ({ G_\alp \times G_\del})
\star_{G_\del}{ (G_\gam \times G_\del\times G_\eta)}$ which contains $A$.
Finally, $G = { (G_\bet \times A )} 
\star_{A}P$. 
\end{example}

%

%Let $A = G_\alp \star G_\gam$. Then 
%$G = { (G_\bet \times A )} 
%\star_{A}(({\dgreen G_\alp \times G_\del})
%\star_{G_\del}{\black (G_\gam \times G_\del\times G_\eta)})
%$.

\begin{proposition}\label{prop:decomp}
The natural inclusions of $P$ and of $A \times B$ into $G$ induce 
an \iso between $P \star_A (A \times B)$ and $G$. 
\end{proposition}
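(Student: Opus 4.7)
The plan is to construct mutually inverse homomorphisms $\Phi : G \to P \star_A (A\times B)$ and $\Psi : P \star_A (A\times B) \to G$ via the universal properties of graph products and of amalgamated free products. As a preliminary one needs that the natural maps $P \to G$, $A \to P$, and $B \to G$ are injective, so that $P$, $A$, and $A\times B$ may genuinely be regarded as subgroups of $G$. Each is established by a standard retraction argument: the map $G \to P$ sending every generator of $\Gb$ to $1$ and acting as the identity on $\Ga$ for $\alp \neq \bet$ visibly respects all defining commutation relations of $G$, hence is a well-defined homomorphism splitting the inclusion; analogous retractions handle the other two maps.

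For $\Psi$, the universal property of the amalgam reduces the task to exhibiting homomorphisms $P \to G$ and $A\times B \to G$ that agree on $A$. The first is the inclusion induced by the subgraph $L \sm \os \bet \hookrightarrow L$. For the second, send $(a,b) \mapsto ab \in G$; this is a homomorphism because every generator of $A$ lies in the link of $\bet$ and therefore commutes with $B = \Gb$ inside $G$. Restricted to $A$, both maps agree with the inclusion $A \hookrightarrow G$, so $\Psi$ is well defined.

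For $\Phi$, by the universal property of $G$ it suffices to give, for each $\alp \in L$, a homomorphism $h_\alp : \Ga \to P \star_A (A\times B)$ such that $h_\alp(x)$ and $h_\gam(y)$ commute whenever $(\alp, \gam) \in I$. For $\alp \neq \bet$ take $h_\alp$ to be $\Ga \hookrightarrow P \hookrightarrow P \star_A (A\times B)$; for $\alp = \bet$ take $h_\bet$ to be $\Gb \hookrightarrow A\times B \hookrightarrow P \star_A (A\times B)$. If $(\alp,\gam) \in I$ with $\alp, \gam \neq \bet$, both images already commute in $P$. The delicate case is $(\alp, \bet) \in I$: here $\alp$ lies in the link of $\bet$, so $\Ga \leq A$, and the amalgamation along $A$ identifies the image of $\Ga$ coming through $P$ with its image coming through $A\times B$, where it commutes with $\Gb$ by construction of the direct product.

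A routine check on the generating sets then shows that $\Phi \circ \Psi$ and $\Psi \circ \Phi$ are the respective identities, completing the isomorphism. The main conceptual hurdle is precisely the mixed commutation just highlighted: one must use exactly the identification $A \sse P$ with $A \sse A \times B$ that the amalgamated product is designed to enforce in order to verify $h_\alp(x)h_\bet(y) = h_\bet(y)h_\alp(x)$ for $\alp$ in the link of $\bet$. Once that point is settled, everything else is bookkeeping within the two universal properties.
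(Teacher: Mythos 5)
Your proposal is correct and follows exactly the route the paper takes: the paper's entire proof is ``Trivial.\ Both sides satisfy the same universal property,'' and your argument is precisely the detailed verification of that claim via the universal properties of the amalgamated product and of the graph product, including the standard retraction arguments needed to view $A$, $P$, and $A\times B$ as subgroups. Nothing is missing; you have simply written out what the authors left implicit.
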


\begin{proof}
 Trivial. Both sides satisfy the same universal property.  
\end{proof}

%\subsection{The \WP in graph products}\label{sec:wpgp}
%\prref{thm:wpinamal} and \refprop{prop:decomp} lead to the following result.

\begin{corollary}\label{cor:fitz}
The \WP of a graph product $G= G(L,I; (\Ga)_{\alp\in L}))$ 
is solvable in \logs \IFF the \WP of all node groups $\Ga$ is in 
\logs. 
\end{corollary}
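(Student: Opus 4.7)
The plan is to prove both directions of \prref{cor:fitz}; the forward direction is essentially immediate, and the backward direction proceeds by induction on the size of the list $L$, using \prref{prop:decomp} together with \prref{thm:wpinamal}.

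For the ``only if'' direction, each node group $\Ga$ is a retract of $G$: sending every generator of a node group $G_\gam$ with $\gam\neq\alp$ to $1$ respects all defining commutation relations and so yields a well-defined \homo $G\to\Ga$ which restricts to the identity on $\Ga$. Hence the \WP of $\Ga$ reduces to the \WP of $G$ by simply interpreting a word over $\Sig_\alp$ as a word over $\Sig$, a trivial \logs reduction.

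For the ``if'' direction, I would induct on $\abs L$. The base case $\abs L=1$ is trivial. For the inductive step, pick any base node $\bet\in L$ and form the three pieces appearing in the algebraic decomposition discussed in \refsec{sec:grapro}: let $B=\Gb$, let $P$ be the graph product on the induced subgraph on $L\sm\os\bet$, and let $A$ be the graph product on the link of $\bet$. By \prref{prop:decomp} we have $G\cong P\star_A(A\times B)$. By the inductive hypothesis applied to the strictly smaller graphs defining $P$ and $A$, the \WP of both $P$ and $A$ lies in \logs; and the \WP of $B$ is in \logs by assumption. In view of \prref{thm:wpinamal}, the only remaining step is to show that the membership problem for $A$ inside $P$ lies in \logs.

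The natural tool for this is the projection $\rho:P\to A$ which sends each generator of a node group in the link of $\bet$ to itself, and every other node generator to $1$. A short case check on the commutation relations defining $P$ shows that $\rho$ is a well-defined \homo and restricts to the identity on $A$; therefore $g\in A$ \IFF $g\cdot\ov{\rho(g)}=1$ in $P$. Given an input word $w$ over the generators of $P$, forming $w\cdot\ov{\rho(w)}$ amounts to deleting all letters coming from nodes outside the link of $\bet$ and appending the formal inverse, which is a transparent \logs transformation; invoking the \logs algorithm for the \WP of $P$ then decides membership. I expect the main (minor) obstacle to be exactly this verification that $\rho$ really is a \homo landing in $A$: one must be careful because the link of $\bet$ need not be a clique in $(L,I)$, so that $A$ carries its own non-trivial commutation structure rather than being a direct product. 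Once this is in place the rest is bookkeeping and a direct appeal to \prref{thm:wpinamal}.
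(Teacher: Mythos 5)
Your proposal is correct and follows essentially the same route as the paper: the forward direction via the fact that node groups are (finitely generated) subgroups, and the backward direction by induction on $\abs L$ using \prref{prop:decomp} together with \prref{thm:wpinamal}, reducing membership in $A$ to the \WP of $P$ via the retraction $w\in A \iff w=\pi_A(w)$ in $P$. The only cosmetic difference is that the paper takes the well-definedness of the retraction onto $A$ for granted rather than flagging it as a point needing verification.
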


\begin{proof}
 If the \WP of  $G$ is in \logs then the same holds for all finitely generated subgroups. For the other direction we write  
 $G$ as $P \star_A (A \times B)$ according to \prref{prop:decomp}.
 Now,  % $A\leq P$ is a retract. Indeed, 
 if $\pi_A:P \to A$ denotes the natural 
 projection which is the identity on $A$ and sends all elements outside $A$ to $1$, then we have $\pi_A(w) =w \iff w \in A$. Thus, the 
 membership problem ``$w\in A$?'' reduces in \logs to 
 an instance of the  \WP ``$w = \pi_A(w)$?'' in $P$.
 By induction, the \WP of  $P$ is solvable in \logs. Hence, \prref{thm:wpinamal} yields the result.  
\end{proof}

\subsection{Dependence graph representation}\label{sec:}
In order to represent elements in a graph product we use its 
dependence graph representation which was first introduced by \mazu 
in trace theory for free partially commutative monoids \cite{maz77}.
This representation takes the complement relation 
$D= L\times L \sm I$ into account. The relation $D$ is called 
\emph{dependence relation}. The idea is that it is enough to ``remember'' the ordering between dependent letters; an idea  which actually goes back to Keller
\cite{kel73}.

%The formal definition is as follows. 
%{$G$, $P$, $A$, and $B$ in the example again}
%\begin{center} 
%\begin{multicols}{2}
%{
%\begin{tikzpicture}[node distance=20mm]
% \node[circle] (a4) {$ \gam$};
% \node[circle] (a3) [below left of=a4] {$ \del$};
% \node[circle] (a2) [above left of=a4] {$ \alp$};

% \node[circle] (a5) [below right of=a4] {$ \eta$};
% \node[circle] (a6) [above right of=a4] {$ \bet$};
% 
% \node[circle] (a1) [below left of=a2] {$ G = $};

%

% \draw (a2)  edge[-,thick] (a3); %link
% \draw (a2)  edge[-,thick, red] (a6);
% \draw (a3)  edge[-,thick] (a4); %Hasse
% \draw (a3)  edge[-,thick] (a5);
% \draw (a4)  edge[-,thick] (a5); %Hasse
% \draw (a4)  edge[-,thick, red] (a6); %Hasse
%\end{tikzpicture}
%}
%{
%\begin{tikzpicture}[node distance=20mm]
% \node[circle] (a4) {$ \gam$};
% \node[circle] (a3) [below left of=a4] {$ \del$};
% \node[circle] (a2) [above left of=a4] {$ \alp$};

% \node[circle] (a5) [below right of=a4] {$ \eta$};
%  
% \node[circle] (a1) [below left of=a2] {$ P = $};

%
% \draw (a2)  edge[-,thick] (a3); %link
%  \draw (a3)  edge[-,thick] (a4); %Hasse
% \draw (a3)  edge[-,thick] (a5);
% \draw (a4)  edge[-,thick] (a5); %Hasse
% 
%\end{tikzpicture}
%}
%\end{multicols}
%\end{center}
%$$A = G_\alp \star G_\gam \quad B = G_\bet \quad C = B \times A.$$

We use $ \Gama= \Ga \sm \os{1}$ as (infinite) alphabets for representing group elements in the groups $\Ga$. A word in $ \Gama^*$ denotes 
in a natural way an element in  $\Ga$. The empty word denotes 
$1 \in \Ga$, all other elements of $\Ga$ have a representation as a single letter. If $a_1\cdots  a_n \in \Gama^*$ is a word then we denote by 
$[a_1\cdots  a_n]$ the corresponding element in $\Gama \cup \os 1$ 
such that $a_1\cdots  a_n= [a_1\cdots  a_n]$ in the group $\Ga$. 
In addition, we let 
 $ \Gam$ be the disjoint union over all $ \Gama$ where $  \alp\in L$.
 In concrete algorithms we cannot work with $ \Gam$ directly. 
 Instead we use for each $\Gama$ a finite subset $\Sigma_\alp = \Sigma_\alp^{-1} \sse \Gama$ such that $\Sigma_\alp$ generates 
 $\Ga$. We let $\Sig\sse \Gam $ be the the  union over all $\Sigma_\alp$.
 The way we represent words $w$ over $\Gam^*$ is to write 
 them with brackets $w = [u_1]\cdots [u_n]$ where 
 each $u_i$ is a word in $\Sigma_\alp^*$ for some $\alp \in L$. If $u_i \neq 1 \in \Ga$ then $[u_i]$ becomes a letter of $\Gam$. Since we work only with graph products where the \WP in node groups 
 is solvable in \logs, we may always assume that $u_i \neq 1 \in \Ga$.
 Thus, $w = [u_1]\cdots [u_n]$ is a word in $\Gam^*$ of length $\abs w = n$. We start with an input word in $\Sigma^*$ and  an initial bracketing is somewhat arbitrary as long as we group only letters from one 
 $\Sigma_\alp$ together.

 Assume  $A\cup B \sse L$ such that $A\times B \sse I$. Let
 $\Gam_A = \bigcup_{\alp \in A} \Gama$ and $\Gam_B = \bigcup_{\bet \in B} \Gamb$. Then we call words $u \in  \Gam_A^*$ and $v \in \Gam_B^*$ 
 \emph{independent}. They can be shuffled into each other without changing the image in $G$. In particular, if $u$ and $v$ are independent then 
$uv=vu$ in $G$. Thus, independence implies commutativity in $G$, but the converse is false because the independence relation $I$ is irreflexive. 
This is a subtle but important feature to have unique normal forms in the 
graph representation. As a special case, let 
 $\bet \in L$ and denote  $I(\bet)=\left(\bigcup_{(\alp,\bet)\in I} \Gama\right)^*$. Then   $u \in \Gamb^*$ and $v\in I(\bet)$ are examples of independent words.
 For a word $ w = a_1 \cdots a_n\in \Gam^*$ we define a node-labeled acyclic graph $D(w) = [V,E,\lam]$, its \emph{dependence graph}, as follows:
\begin{itemize}
	\item 
	The vertex set $V$ is $ \os{1 \lds n}$.
	\item The label $\lam(i)$ of a vertex $i$ is the letter $ a_i \in G_{\alp_i}$. 
 	\item Arcs are from $ i$ to $ j$ for $ i <j$ where
	labels $\lam(i)$, $\lam(j)$ are dependent.	Thus, $E= \set{(i,j)\in V \times V}{i <j \wedge (\alp_i,\alp_j) \in D}$.						\end{itemize}
We view $D(w)$ as an abstract graph. This means we let $D(w) = D(w')$, 
if $D(w)$ and $D(w')$ are isomorphic as node-labeled directed graphs. 
For words $w, w' \in \Gam^*$ we write $w \equiv w'$, if $D(w)$ and $D(w')$ are isomorphic. For example, if $a\in \Gama$ and $b \in \Gamb$
with $(\alp, \bet) \in I$, then $ab \equiv ba$ and $D(ab) = D(ba)$. 
If $a,a'\in \Gama$ then   $D(aa')$ has two vertices and one arc, but
$D([aa'])$ has at most one vertex, hence $D(aa') \neq  D([aa'])$
and $aa' \not\equiv [aa']$.

Given an abstract graph $D(w) = [V,E,\lam]$
we associate to it a group element $g \in G$ as follows. 
We choose a topological sorting of $V$, this means we 
identify $V$ with $ \os{1 \lds n}$ such that $(i,j)\in E$ implies 
$i < j$. Then we let $g = \lam(1) \cdots \lam(n) \in G$. It is easy to see 
by induction on $n$ that $g$ is well-defined. 
The graph $D(w)$ can be reconstructed by its \emph{Hasse diagram}. % $H(w)$. The 
The Hasse diagram removes all transitive edges. This means, we remove arc $(i,k)$ from $E$ as soon as there are 
$(i,j),(j,k)\in E$. The advantage of the Hasse diagram is that it is much smaller. For example the outdegree of every node is bounded by $\abs L$ whereas the outdegree of a node in $D(w)$ can be $\abs w -1$. 

The following rewriting procedure on dependence graphs relies on their  Hasse diagrams.
Let $D(w)$ be a dependence graph of some word $w \in \Gam^*$. 

{\noindent \bf Rewriting procedure:}
As long as  for some $\bet \in L$ there is an arc in the Hasse diagram  from $ i$ to $ j$ 
with labels $b$, $b'\in \Gamb$ do the following:
\begin{itemize}
\item Multiply $b\cdot b' = [bb']$ in 
$ \Gb$. 

	\item If $[bb'] = 1$ then remove vertices $ i$ and $ j$ and their incident arcs. 
\item If $[bb'] \neq  1$ then remove vertex $j$ and its 
incident arcs.
 Relabel vertex $ i$ by the letter $[bb']\in \Gamb$. 
	\end{itemize}
	The procedure transforms a dependence graph into a dependence graph
	with less vertices, but it
	does not change the corresponding group element in $G$. The rewriting procedure terminates in at most $\abs w$ steps. It yields a 
dependence graph $D(\wh w)$ with the property that 
labels of neighbors in the Hasse diagram belong to different 
nodes in $L$. A dependence graph with this property is called \emph{reduced}. A word $w \in \Gam^*$ is called \emph{reduced} if its
dependence graph  is reduced. 
We use the following characterization in order to check that 
a word $w$ and its dependence graph $D(w)$ are reduced.

\begin{lemma}\label{lem:charred}
A word $w \in \Gam^*$ is reduced \IFF it does not 
contain any factor $bub'$ such that $b,b' \in \Gamb$ and 
$u \in I(\bet)$ where $\bet \in L$.  
\end{lemma}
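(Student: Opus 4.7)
The plan is to prove the lemma directly by unfolding the definition of ``reduced'' and translating the condition on Hasse-diagram arcs into a condition on factors of the word. I would set things up by writing $w = a_1 \cdots a_n$ with $a_k \in \Gamma_{\alp_k}$ and recalling two things: first, $(i,j) \in E$ in $D(w)$ means $i<j$ and $(\alp_i,\alp_j) \in D$; second, $(i,j)$ is an edge of the Hasse diagram iff it is in $E$ and no $k$ with $i<k<j$ satisfies both $(i,k)\in E$ and $(k,j)\in E$. A word is not reduced precisely when some Hasse-diagram arc $(i,j)$ has both endpoints labeled by letters from the same $\Gamb$ (using that $I$ is irreflexive, so $(\bet,\bet) \in D$ and any two letters from $\Gamb$ are automatically connected in $D(w)$).

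For the direction $(\Leftarrow)$ (contrapositive of ``if not reduced then has such a factor''), I would take such a Hasse-diagram arc $(i,j)$, set $b = a_i$, $b' = a_j$, and let $u = a_{i+1} \cdots a_{j-1}$ be the infix strictly between them. The claim to verify is $u \in I(\bet)$, i.e., every intermediate letter lies in some $\Gama$ with $(\alp,\bet) \in I$. If some $a_k$ with $i<k<j$ had $(\alp_k,\bet) \in D$, then both $(i,k)$ and $(k,j)$ would be arcs in $E$, making $(i,j)$ transitive and thus absent from the Hasse diagram, a contradiction.

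For the direction $(\Rightarrow)$ (contrapositive: if such a factor exists then not reduced), I would suppose $w = w_1 \, b \, u \, b' \, w_2$ with $b, b' \in \Gamb$ and $u \in I(\bet)$, and locate $b$ at position $i$ and $b'$ at position $j$ in $w$. Since $(\bet,\bet) \in D$, the pair $(i,j)$ lies in $E$. Any intermediate position $k$ carries a label from some $\Gama$ with $(\alp,\bet) \in I$, so $(\alp,\bet) \notin D$, so neither $(i,k)$ nor $(k,j)$ is in $E$; thus no intermediate vertex can witness transitivity, and $(i,j)$ survives in the Hasse diagram. Both endpoints are labeled by letters of $\Gamb$, so $w$ is not reduced.

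I do not expect a real obstacle here; the only thing to watch is the bookkeeping between ``neighbors in the Hasse diagram'' (a graph-theoretic property) and ``factor of the word of the form $bub'$'' (a positional property in the linear word $w$). The translation rests entirely on the irreflexivity of $I$ (so that two letters from $\Gamb$ are always dependent) and the characterization of Hasse edges via absence of intermediate witnesses, both of which are used symmetrically in the two directions.
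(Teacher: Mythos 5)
Your proof is correct and follows essentially the same route as the paper's: both directions rest on the equivalence between ``$(i,j)$ is a Hasse arc with both labels in $\Gamma_\beta$'' and ``the infix between positions $i$ and $j$ lies in $I(\beta)$'', using irreflexivity of $I$ so that any two letters of $\Gamma_\beta$ are dependent and the absence/presence of an intermediate dependent letter decides whether the arc is transitive. The paper states the second direction directly rather than contrapositively, but the underlying argument is identical.
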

%%%%%%%%%%%%%

\begin{proof}
If a factor $bub'$ with $b,b' \in \Gamb$ and 
$u \in I(\bet)$ appears in $w$ then vertices $i$ and $j$ corresponding to 
the letters $b$ and $b'$ are neighbors in the Hasse diagram of $D(w)$ which is therefore not reduced. For example, in \prref{fig:fred} the dependence graph of the word $ ab\ov aca \ov b$ on  the left is not reduced but  the dependence graph of the word $ bc \ov ba$ on  the right is reduced.
\begin{figure}[h]
\begin{center} 
\begin{multicols}{2}{
\begin{tikzpicture}[node distance=20mm]
 \node[circle] (a4) {$ c$};
 \node[circle] (a3) [below left of=a4] {$ \ov a$};
 \node[circle] (a2) [above left of=a4] {$ b$};
 \node[circle] (a1) [left of=a3] {$  a$};

 \node[circle] (a5) [below right of=a4] {$ a$};
 \node[circle] (a6) [above right of=a4] {$ \ov b$};
 \node[circle] (= ) [below right of=a6] {is equal in $G$ to };

 \draw (a1)  edge[->,thick, red] (a3); %Hasse
 \draw (a1)  edge[->,bend left] (a4); 
 \draw (a1)  edge[->,bend right] (a5);
 \draw (a2)  edge[->,thick, red] (a4); %Hasse
 \draw (a2)  edge[->] (a6);
 \draw (a3)  edge[->,thick, red] (a4); %Hasse
 \draw (a3)  edge[->] (a5);
 \draw (a4)  edge[->,thick, red] (a5); %Hasse
 \draw (a4)  edge[->,thick, red] (a6); %Hasse
\end{tikzpicture}}

{
\begin{tikzpicture}[node distance=20mm]
 \node[circle] (a4) {$ c$};
 %\node[circle] (a3) [below left of=a4] {$ \ov a$};
 \node[circle] (a2) [above left of=a4] {$ b$};
% \node[circle] (a1) [left of=a3] {$  a$};

 \node[circle] (a5) [below right of=a4] {$ a$};
 \node[circle] (a6) [above right of=a4] {$ \ov b$};

 %\draw (a1)  edge[->,thick, red] (a3); %Hasse
 %\draw (a1)  edge[->,bend left] (a4); 
 %\draw (a1)  edge[->,bend right] (a5);
 \draw (a2)  edge[->,thick, red] (a4); %Hasse
 \draw (a2)  edge[->] (a6);
 %\draw (a3)  edge[->,thick, red] (a4); %Hasse
 %\draw (a3)  edge[->] (a5);
 \draw (a4)  edge[->,thick, red] (a5); %Hasse
 \draw (a4)  edge[->,thick, red] (a6); %Hasse
\end{tikzpicture}}
\end{multicols}
\end{center}
\caption{Dependence graphs ({Hasse diagrams in {\red red})} of $ ab\ov aca \ov b$ and $ bc \ov ba$.}\label{fig:fred}
\end{figure}
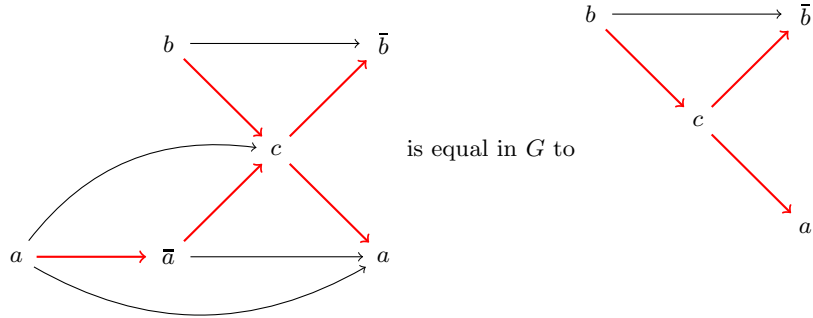

For the other direction, assume that 
no such factor appears. Let $w = a_1 \cdots a_n$ and $[V,E,\lam]$ be
its dependence graph. Consider $1 \leq i < j \leq n$ 
such that $\lam(i), \lam(j) \in \Gamb$ for some $\bet \in L$. 
Then there is some $i < k < j$ such that $\lam(k) \in \Gama$ and 
$(\alp, \bet) \in D$. Hence, $D(w)$ is  reduced.
\end{proof}

If a word $w$ is reduced, then its dependence graph is a 
unique normal form for the corresponding element in the graph product
$G$. This follows from the following technical result. 

\begin{proposition}[\cite{dl08ijac}]\label{prop:dieloh}
The rewriting procedure is confluent and yields normal forms for group elements in the graph product. In particular, reduced dependence 
graphs are isomorphic \IFF the associated group elements are the same. 
\end{proposition}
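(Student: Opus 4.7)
The plan is to establish confluence of the rewriting by Newman's lemma and then derive the ``only if'' direction of the iff by a Church--Rosser argument. I model the procedure as a rewrite relation on $\equiv$-classes: write $w \Rightarrow w'$ whenever $w \equiv w_1 b b' w_2$ with $b, b' \in \Gamma_\beta$ for some $\beta \in L$ and $w' = w_1 [bb'] w_2$ (with $[bb']$ omitted entirely when $[bb'] = 1$). A single step of the procedure on $D(w)$ is exactly one $\Rightarrow$-step modulo $\equiv$, because two $\Gamma_\beta$-letters at positions $i < j$ are Hasse-neighbors in $D(w)$ iff every intermediate label is independent of $\beta$, and such letters can be commuted past either endpoint. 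Termination is immediate from the strict length decrease; invariance of the associated group element follows from the defining relations of the graph product; and by \prref{lem:charred} the $\Rightarrow$-normal forms are precisely the reduced words.

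The main technical task is local confluence: given $w \Rightarrow w_1$ using $\Gamma_\beta$-letters at positions $i < j$ and $w \Rightarrow w_2$ using $\Gamma_\gamma$-letters at positions $k < \ell$, I must produce a common descendant. If the four positions are distinct, the rewrites do not interfere, since the merger (or deletion) of two $\Gamma_\beta$-letters replaces them by a letter of the same type $\Gamma_\beta$, which has the same commutation profile with every other node; hence Hasse-adjacency of the other pair is preserved regardless of the order. If a position is shared, then $\beta = \gamma$ and we have three letters $b_1, b_2, b_3 \in \Gamma_\beta$ whose pairwise Hasse-adjacencies are consecutive. Associativity in $G_\beta$ then supplies a common descendant, possibly after an additional $\Rightarrow$-step on one side when an intermediate product happens to vanish.

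Newman's lemma now yields confluence of $\Rightarrow$, hence the Church--Rosser property, and every word therefore has a unique reduced normal form modulo $\equiv$. Since $G$ is by definition the quotient of the free product $\star_{\alpha \in L} G_\alpha$ by the commutation relations $xy = yx$ for $(\alpha, \beta) \in I$, two words $w, w'$ represent the same element of $G$ iff they are connected by a finite alternating sequence of $\equiv$-moves and $\Rightarrow$-steps applied forwards or backwards. Church--Rosser forces the normal forms to coincide up to $\equiv$; when $w$ and $w'$ are themselves reduced, this is exactly the statement that $D(w)$ and $D(w')$ are isomorphic as node-labeled directed graphs. The converse direction is immediate, since $D(w)$ determines its group value as the product of the labels along any topological sort.

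The only real obstacle is the disjoint-redex case of local confluence, where one needs to verify that performing one rewrite does not accidentally destroy or create a Hasse-adjacency required by the other. This rests on the elementary observation that all letters in $G_\beta$ share the same dependence profile with every other node, so replacing two such letters by their product (or removing both) leaves the dependence pattern with all non-$\beta$ nodes unchanged, which in turn preserves applicability of any other redex.
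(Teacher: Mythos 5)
Your argument is correct, but there is nothing in the paper to measure it against: \prref{prop:dieloh} is imported wholesale from \cite{dl08ijac} and the paper gives no proof of its own. What you have written is the standard self-contained justification, and its main reductions are sound. The identification of one step of the paper's procedure with one $\Rightarrow$-step modulo $\equiv$ is exactly the content of \prref{lem:charred}: a pair of $\Gamma_\beta$-vertices is Hasse-adjacent precisely when every letter between them lies in $I(\beta)$, and such a block commutes past $b'$, so the two letters can be made literally adjacent. Termination is length decrease, and the passage from confluence to the normal-form statement uses only that $G$ is presented as a monoid over $\Gamma$ by the commutations together with the relations $bb'=[bb']$. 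Two points in the critical-pair analysis deserve to be spelled out rather than asserted. First, in the overlap case you should check that the only genuine overlap is three $\Gamma_\beta$-vertices sharing the \emph{middle} one: two Hasse arcs from a common endpoint to two distinct $\Gamma_\beta$-vertices cannot coexist, since the nearer target is a blocker for the arc to the farther one; and when an intermediate product vanishes, the two one-step descendants place the surviving letter at different positions, so you must observe that the resulting graphs are still isomorphic as abstract labeled graphs (they are, because no $\beta$-dependent vertex lies strictly between the three). Second, in the disjoint case the preservation of the other redex should be argued separately for relabeling (the new label stays in $\Gamma_\beta$, hence blocks nothing it did not already block) and for deletion (removing vertices only removes potential blockers). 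With these details supplied, Newman's lemma and the Church--Rosser argument close the proof exactly as you describe; this is also, in substance, how the result is established in the cited literature.
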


In the following, if  $w \in \Gam^*$ then $[w]$ denotes a 
reduced word such that $w=[w] \in G$. The dependence graph $D([w])$
is uniquely defined by $w$ (up to isomorphism). The normal form is therefore 
$D([w])$ rather than the reduced word $[w]$. Note that the notation 
$[w]$ is a generalization of the notation $[a_1 \cdots a_n]$ used above. 
\prref{prop:dieloh} reduces  the \WP of $G$ to the \WP{}s 
of the node groups as follows: 
We start with a word $w \in \Gam^*$. In order to run the rewriting procedure on $D(w)$, we just have to decide \WP{}s in node groups $\Gb$. 
At the end we have $w=1 \in G$ \IFF the procedure terminates in the empty graph. It is however far from obvious that we can implement the procedure 
in \logs. The following theorem is crucial.
It uses \prref{cor:fitz} as a black box. 

\begin{theorem}\label{thm:compred}
Let $G = G(L,I; (\Ga)_{\alp\in L})$ be a finitely generated graph product. Assume that the \WP for each node group $\Ga$ is in \logs. Then 
there is a \logs computation which transforms an input word $w$ over generators into a reduced dependence graph for $w$. 
\end{theorem}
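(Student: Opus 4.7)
The plan is to simulate the rewriting procedure of this section directly in \logs, using \prref{cor:fitz}---which already tells us that the \WP of the ambient graph product $G$ is in \logs---as an oracle.

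For each node $\bet \in L$, the \emph{link subgroup} $L_\bet = \gen{\bigcup_{(\alp,\bet)\in I}\Ga}$ is itself a graph sub-product of $G$. By the universal property there is a retraction $\rho_\bet : G \to L_\bet$ which is the identity on link generators and sends every other generator to $1$. Consequently, for any factor $u$ of the input word, the test ``$u \in L_\bet$ as an element of $G$'' amounts to one query to the \WP of $G$: strip from $u$ every letter belonging to a non-link node group to obtain $u'$, and ask whether $u = u'$ in $G$. This is in \logs.

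Given $w = a_1 \cdots a_n \in \Sig^*$ we say positions $i<j$ with $\lam(i), \lam(j) \in \Gamb$ are \emph{$\bet$-consecutive} if $a_{i+1}\cdots a_{j-1} \in L_\bet$ in $G$. Let $N_\bet(i)$ be the smallest $j>i$ that is $\bet$-consecutive with $i$ (undefined if none), and $P_\bet(i)$ the backwards analogue; both are \logs computable by the previous step. We declare $i$ \emph{alive} iff (a)~$P_\bet(i)$ is undefined, so that $i$ is a \emph{chain start}, and (b)~the chain product in $\Gb$, obtained by emitting in order the letters $a_i, a_{N_\bet(i)}, a_{N_\bet^2(i)}, \ldots$ --- the iteration uses only a single position pointer, hence logarithmic space --- is non-trivial; that triviality check is one \WP query in $\Gb$, which is in \logs by hypothesis. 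Every alive position carries this chain product, written as a word in $\Sig_\bet^*$, as its reduced label.

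From this data we output the reduced dependence graph: the nodes are the alive positions with their labels; an edge joins alive $i<j$ whenever their labels lie in dependent node groups; to emit the Hasse diagram we further suppress every such edge for which there is an alive $k$ with $i<k<j$ whose label's node group is dependent on both, one more \logs test. The main obstacle is the correctness claim: that the forward chain of a chain start $i$ is exactly the set of positions of $w$ that the rewriting procedure collapses to a single vertex at $i$, and that the emitted product is the resulting label. Here \prref{prop:dieloh} is decisive. Confluence lets us reduce in any order; in particular, we may first rewrite each subword $a_{N_\bet^l(i)+1}\cdots a_{N_\bet^{l+1}(i)-1}$, which by construction represents an element of $L_\bet$ and hence rewrites to a word over link letters commuting with $\bet$. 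After this preliminary reduction, consecutive chain members are Hasse-adjacent and collapse in the asserted order, while the minimality built into $N_\bet$ prevents us from skipping chain members or mixing chains belonging to different node groups.
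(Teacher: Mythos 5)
Your algorithmic skeleton is genuinely different from the paper's: the paper runs $\abs L$ \emph{sequential} rounds, one per node $\alp\in L$, each round rewriting the word by a left-to-right greedy computation of maximal blocks $a_iu_i\cdots a_mu_m=a_i\cdots a_mu_i\cdots u_m$, and its entire effort goes into proving that this greedy pass is correct (\prref{lem:geo} characterizing $\alp$-reduced words via the link condition, and \prref{lem:proc} establishing the ``$\alp$-prefix'' invariant). You instead compute, in one pass over the \emph{original} word, a successor function $N_\bet$ on $\bet$-positions and read off the reduced graph from the resulting chains. Your ingredients are sound as far as they go: the membership test ``$u\in L_\bet$'' via the retraction is exactly the trick used in \prref{cor:fitz}, the pairwise condition ``$a_{i+1}\cdots a_{j-1}\in L_\bet$'' is indeed equivalent to the two $\bet$-letters being mergeable past the intervening factor, and the space analysis is fine.

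The genuine gap is the correctness of the chain decomposition itself, which you dispose of in one sentence (``the minimality built into $N_\bet$ prevents us from skipping chain members or mixing chains''). Because your $N_\bet(i)$ is allowed to \emph{skip over} intermediate $\bet$-positions (whenever their letters cancel inside the intervening factor), it is not at all obvious that $N_\bet$ is injective and that $P_\bet$ is its inverse; without that, a position could be counted in two chain products, or be orphaned (not a chain start because $P_\bet$ is defined, yet never reached by iterating $N_\bet$ from any chain start) and its letter silently dropped. Likewise, the subword between consecutive chain members can contain entire $\bet$-chains of its own, and your ``reduce each subword first, then collapse'' argument needs these nested chains to have trivial product and needs to exclude \emph{crossing} chains --- none of which is argued. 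These are real combinatorial statements about cancellation in graph products (on the order of: if $a_{i+1}\cdots a_{j-1}\in L_\bet$ and some $\bet$-position $p$ with $i<p<j$ satisfies $a_{i+1}\cdots a_{p}\in L_\bet$, then some earlier $\bet$-position already witnesses $\bet$-consecutiveness with $i$), and they are precisely the kind of thing the paper's \prref{lem:geo} and \prref{lem:proc} are there to supply for its own procedure. Note also that the paper's maximal-block condition is a single group identity that can fail for an intermediate block length and hold again for a longer one, so your greedy pairwise iteration computes a genuinely different decomposition of the $\bet$-positions; agreement of the two outputs is a consequence of confluence only \emph{after} one proves that each procedure outputs a reduced graph, which for your procedure is exactly what is missing.
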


\begin{proof}
The proof ends after  
\prref{lem:proc}. The result does not depend on the choice of generators. Therefore, we may assume $w \in \Sig^*$ where $\Sig = 
\bigcup_{\alp \in L} \Sig_\alp$ and each $\Sig_\alp = \Sig_\alp^{-1} $ generates $\Ga$. Every letter of the countable alphabet $\Gam$ is represented internally by some word in $\Sig^*$. Moreover, as 
$\Sig \sse \GG$ the input $w$ is a product of letters over $\Gam$. 
We perform $ \abs L$ rounds of \logs reductions. 
In each round  we minimize the number of letters $ a_i \in \Gama$
for some $\alp \in L$. For this we introduce the following notion. 
We say that $ w= u_0a_1u_1 \cdots a_nu_n$ is the 
\emph{$\alp$-factorization} of a word $w\in \Gam^*$, if we have 
$ a_i \in \Gama$ for $1 \leq i \leq n$ and $u_i \in (\Gam\sm \Gama)^*$
for $0\leq i \leq n$. The $\alp$-factorization exists and it is unique. 
We use the following procedure. 

{\noindent \bf The $\alp$-reduction:} Let
$ w= u_0a_1u_1 \cdots a_nu_n$ be  its $\alp$-factorization. For $n = 0$ do nothing. 
For $n > 0$ start with $i = 1$. 
\begin{itemize}
\item
From left-to-right: Stop at $ a_i$. Compute  the maximal $ m\geq i$ 
(by calling instances of the \WP in $G$) such that 
$$ a_i u_i \cdots  a_{m} u_{m}= 
a_{i}  \cdots a_{m}u_i \cdots u_{m} \in G$$
\item 
Replace $ a_i u_i \cdots  a_{m} u_{m}$ by $a u_i \cdots u_{m} $
where $a = [a_{i}  \cdots a_{m}]\in \Ga = \Gama \cup \os 1$. 
\item 
If $ m= n$ then the $\alp$-reduction is finished, otherwise 
change $i$ to $m+1$, stop there and continue the left-to-right phase.  
\end{itemize}

The overall  procedure performs $\alp$-reductions for all $\alp \in L$ in any order. The output can be read as a word $\wh w$ over $\Gam$. Reading from left to right we compute in a final round the actual dependence graph $D(\wh w)$. Each $\alp$-reduction can be done in \logs
due to \prref{cor:fitz}. Since there are only a constant number of rounds 
the overall  procedure is in \logs, too. It remains to prove that 
the output $D(\wh w)$ is the reduced dependence graph corresponding to the
input word $w$. The proof will be based on the next two lemmata. 

We define a word $w \in \Gam^*$ to be $\alp$-reduced, if 
every other word $w'\in \Gam^*$ having less letters from $\Gama$ denotes 
a different group element in $G$. Due to \prref{prop:dieloh} a word $w$ 
is reduced \IFF it is $\alp$-reduced for all $\alp \in L$.

\begin{lemma}\label{lem:geo}
Let $w = u_0a_1u_1 \cdots a_nu_n\in \GG^*$ be  its $\alp$-factorization. Then $w$ is $\alp$-reduced
\IFF $a_iu_ia_{i+1} \neq a_i a_{i+1}u_i \in G$ for all $1 \leq i < n$. 
\end{lemma}

\begin{proof}
If $a_iu_ia_{i+1} = a_i a_{i+1}u_i \in G$ for some $1 \leq i < n$, then 
$w$ is not $\alp$-reduced. Hence, it is enough to show that if  $a_iu_ia_{i+1} \neq a_i a_{i+1}u_i \in G$ for all $1 \leq i < n$ then  $w$ is $\alp$-reduced. This is true, 
if $w$ is reduced. Hence we may assume that $w$ is not reduced. Then there exists $\bet \in L$ and a  
factor $bub'$ with $b,b' \in \Gb$ and $u \in I(\beta)$. Since $a_iu_ia_{i+1} \neq a_i a_{i+1}u_i$ we must have $\alp \neq \bet$. If the factor $bub'$ is a factor inside 
some $u_i$, then we can rewrite it by $[bb']u$ and we obtain a word $w'$ which satisfies the same property, but which  length over $\GG$ is shorter. Hence by induction on the length
$w'$ is $\alp$-reduced. This implies that $w$ is $\alp$-reduced, too.

Thus we may assume that for some $i < j$ we have 
$u_i= p_ibq_i$ and $u_j= p_jb'q_j$ 
with $q_ia_{i+1}$, $a_jp_j \in I(\bet)$. 
In particular, $u_ia_{i+1} = p_iq_ia_{i+1}b \in G$ and we have $w = w'$ in $G$ where 
$w' = u_0a_1u_1 \cdots u_{i-1}a_{i} p_iq_ia_{i+1}bu_{i+1} \cdots a_{n}u_{n}$. 
 By induction on $\abs {j-i}$ we obtain that $w'$ is $\alp$-reduced. This implies again that $w$ is $\alp$-reduced.
\end{proof}

Let $w = u_0a_1u_1 \cdots a_nu_n\in \GG^*$ be  its $\alp$-factorization
and $0 \leq i \leq n $. We say that $u_0a_1u_1 \cdots a_iu_i$ 
is an $\alp$-prefix, if there are no $0<k<\ell$ such that $k \le i$ and $a_{k} u_{k} \cdots  a_{\ell}u_{\ell} = 
[a_{k}  \cdots a_{\ell}] u_{k} \cdots  u_{\ell} \in G$ with 
$\ell \leq n$. Note that $u_0$ is an $\alp$-prefix.
Moreover, $w$ is an $\alp$-prefix of itself \IFF $w$ is $\alp$-reduced
by \prref{lem:geo}.

\begin{lemma}\label{lem:proc}
Let $w = u_0a_1u_1 \cdots a_nu_n\in \GG^*$ be  its $\alp$-factorization
and $0 \leq i < n $ such that $u_0a_1u_1 \cdots a_iu_i$ 
is an $\alp$-prefix and let $m$ be maximal such that \\
$a_{i+1} u_{i+1} \cdots  a_{m }u_{m} = 
[a_{i+1}  \cdots a_{m }] u_{i+1} \cdots  u_{m} \in G$. 

Then $u_0a_1u_1 \cdots a_iu_i [a_{i+1}  \cdots a_{m }] u_{i+1} \cdots  u_{m}$ 
is an $\alp$-prefix of \\ $u_0a_1u_1 \cdots a_iu_i [a_{i+1}  \cdots a_{m }] u_{i+1} \cdots  u_{m}a_{m+1} u_{m+1} \cdots  a_{n} u_{n}$. 
\end{lemma}

\begin{proof}
 Straightforward since $i<m\leq n$  and $m$ was chosen to be maximal.
 \end{proof}

%\proofof{\prref{thm:compred}}
{In order to finish the proof of \prref{thm:compred} it  is enough to show that the \logs procedure ``$\alp$-reduction'' computes an 
 $\alp$-reduced word. The invariant of the procedure is that in the left-to-right phase  $\alp$-prefixes
are computed. This follows from \prref{lem:proc}. At the end of the $\alp$-reduction the
word $w$ itself becomes an $\alp$-prefix. But then \prref{lem:geo} tells us that $w$ is an $\alp$-reduced word.  Thus, if after an  $\alp$-reduction we perform a $\bet$-reduction then the word is
$\alp$- and $\bet$-reduced, and so on. Hence, the result of \prref{thm:compred}.} 
\end{proof}

\prref{thm:compred} implies the following result.

\begin{corollary}\label{cor:linconjug}
The \WP of a graph product of linear groups 
is solvable in \logs.
\end{corollary}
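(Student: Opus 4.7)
The plan is to reduce the corollary to two already established ingredients. First, I would invoke the classical theorem of Lipton, Zalcstein and Simon \cite{lz77,Sim79}, which states that every finitely generated linear group has word problem solvable in \logs. Second, I would appeal to \prref{cor:fitz}, which transports the ``WP in \logs'' property from the node groups to the full graph product in both directions. Combining the two assertions yields the corollary immediately.

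Concretely, the steps are: (1) observe that each node group $\Ga$ in the graph product $G = G(L,I;(\Ga)_{\alp\in L})$ is by hypothesis finitely generated and linear, so by Lipton--Zalcstein--Simon each $\Ga$ has \WP in \logs; (2) invoke the ``if'' direction of \prref{cor:fitz} to conclude that $G$ itself has \WP in \logs. A completely equivalent route goes through \prref{thm:compred}: since each $\Ga$ has \WP in \logs, the theorem produces in \logs the reduced dependence graph $D([w])$ of any input word $w \in \Sig^*$, and by \prref{prop:dieloh} one has $w = 1$ in $G$ \IFF $D([w])$ is empty, which can be checked in \logs.

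Honestly there is no real obstacle here; all the combinatorial and group-theoretic work has been discharged earlier in the paper. The Bass--Serre decomposition $G = P \star_A (A \times B)$ from \prref{prop:decomp}, the \logs algorithm for the amalgam in \prref{thm:wpinamal}, and the induction on $\abs L$ inside the proof of \prref{cor:fitz} together already establish the transfer result; and Lipton--Zalcstein--Simon supplies the hypothesis needed to trigger it. The only thing worth stressing, to head off a potential misreading, is that a graph product of linear groups is \emph{not} known to be linear itself (this is the open problem of \cite{hw99} mentioned in the introduction), so the corollary cannot be obtained by a direct application of Lipton--Zalcstein--Simon to $G$; the detour through \prref{cor:fitz} is what makes the statement work.
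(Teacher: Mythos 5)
Your proposal is correct and matches the paper's argument: the corollary is stated there as an immediate consequence of the transfer results (the paper cites \prref{thm:compred}, which together with \prref{cor:fitz} and the Lipton--Zalcstein--Simon theorem for the linear node groups gives exactly your two-step reduction). Your closing remark that the graph product itself is not known to be linear, so the detour through the transfer result is genuinely needed, is also the right point to flag.
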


 The next corollary shows that shortlex normal forms can be computed 
 in \logs if this is possible for all node groups. In the statement of the result we assume that we have a linear order on the set of nodes $L$ and that each node group $\Ga$ is finitely generated by some 
 linearly ordered set $\Sig_\alp$. (For simplicity assume $\Sig_\alp= \Sig_\alp^{-1}$.)  We use 
 $\Sig = \bigcup_{\alp \in L}\Sig_\alp$ as a generating alphabet for 
 $G= G(L,I; (\Ga)_{\alp\in L}))$. The linear order is as follows. 
 If $\alp < \bet$ then every letter in $\Sig_\alp$ is before 
 $\Sig_\bet$. If $\alp = \bet$ then we use the order in $\Sig_\alp$. 
 The \emph{shortlex order} on $\Sig^*$ is defined as usual: 
 If $\abs u < \abs v$ then $u<v$ in the shortlex order. If 
 $\abs u = \abs v$ and $u$ is lexicographically before $v$ then $u<v$ in the shortlex order.
 The \emph{shortlex normal form} of an element $g$ is the unique 
 minimal word $w \in \Sig^*$ which satisfies $w = g$ in $G$. 
 The following corollary is an immediate consequence of \prref{thm:compred}. It generalizes the 
 main result in \cite{dkl12conm}.
%%%%%%%%%%%%%%%%%%%%%%%%%
\begin{corollary}\label{cor:compred}
Let $\Sig$ and  the graph product $G= G(L,I; (\Ga)_{\alp\in L}))$ as above. If for each node group $\Ga$ the shortlex normal form is computable in \logs, then the shortlex normal form in $G$ is computable in \logs.
\end{corollary}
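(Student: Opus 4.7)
The plan is to combine \prref{thm:compred} with a \logs computation of the lex-smallest linear extension of a Mazurkiewicz dependence graph over the finite alphabet $\Sig$. First, since shortlex computability in \logs for each node group $\Ga$ entails that the \WP of $\Ga$ is in \logs, \prref{thm:compred} applies and yields, in \logs, the reduced dependence graph $D([w])$ of the input $w \in \Sig^*$. Its vertices are labeled by elements of $\Gama$ for various $\alp \in L$, and by composition of \logs-computable functions I replace each such label $a$ by its shortlex normal form $u_a \in \Sig_\alp^*$. Expanding each vertex into a chain realizing the letters of $u_a$ from left to right then yields a letter-labeled directed acyclic graph $\wh D$ over $\Sig$, whose arcs record the induced dependence relation $\set{(s,t)}{s \in \Sig_\alp,\, t \in \Sig_\bet,\, (\alp,\bet) \notin I}$. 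Since $I$ is irreflexive, letters from the same $\Sig_\alp$ remain dependent in this relation, so $\wh D$ is a genuine Mazurkiewicz dependence graph over $\Sig$.

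Next I observe that the shortlex normal form of $w$ in $G$ coincides with the lex-smallest linear extension of $\wh D$. Every linear extension of $\wh D$ yields a word equal to $w$ in $G$ whose length equals the geodesic length of $w$ (by \prref{prop:dieloh} and additivity of geodesic length over the reduced dependence graph, this length equals the number of vertices of $\wh D$); conversely, a short exchange argument shows that the shortlex normal form of $w$ must employ the shortlex of each label $a_v$ (otherwise replacing a vertex's sub-word by the shortlex of $a_v$ yields a strictly lex-smaller representation), so it is itself a linear extension of $\wh D$.

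The remaining step is to compute the lex-smallest linear extension of $\wh D$ in \logs. This is exactly the problem solved for graph groups in \cite{dkl12conm}, whose algorithm operates on arbitrary letter-labeled Mazurkiewicz dependence graphs over a finite alphabet and thus applies to $\wh D$ without modification. Composing the three \logs-computable steps -- construction of $D([w])$ via \prref{thm:compred}, substitution of node-group shortlex normal forms, and shortlex linearization of $\wh D$ -- produces the shortlex normal form of $w$ in $G$ in \logs. The main obstacle is the shortlex linearization step itself, which is not delivered by \prref{thm:compred} and must be supplied externally; invoking \cite{dkl12conm} does so and makes the corollary an essentially immediate consequence of \prref{thm:compred}.
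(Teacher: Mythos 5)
Your proof is correct and follows the route the paper intends: the paper gives no explicit argument, declaring the corollary an immediate consequence of \prref{thm:compred} and a generalization of the main result of \cite{dkl12conm}, and your three-step composition (reduced dependence graph via \prref{thm:compred}, substitution of node-group shortlex normal forms into the vertex labels, lexicographically first linearization of the resulting finite-alphabet trace as in \cite{dkl12conm}) is precisely the glue that makes that claim explicit. The only points you assert rather than prove --- additivity of geodesic length over the syllables of the reduced dependence graph, and that every geodesic word for the group element is a linearization of some such expansion of that graph --- are standard facts about graph products and do not constitute gaps.
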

 %%%%%%%%%%%%%%%%%%%%%%%%%
 
 The following proposition will be used in the next section for solving conjugacy in graph products. Due to \prref{thm:compred} it  generalizes \prref{cor:fitz}. It states that 
 ``pattern matching'' over dependence graphs is possible in \logs. 
 
\begin{proposition}\label{prop:patmat}
Let $G= G(L,I; (\Ga)_{\alp\in L}))$ be a graph product 
such that  the \WP of all node groups $\Ga$ is in 
\logs. 
Then the following problem can be solved in \logs. 
Input: Words $p$, $t \in \Gam^*$. Problem: 
Do  $x,y$ exist such that $t \equiv  xpy$?
\end{proposition}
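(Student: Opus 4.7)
The plan is to reduce the existence of $x,y$ with $t \equiv xpy$ to a combinatorial question about the dependence graph $D(t)$ and to answer that question in \logs by exploiting that $L$ has constant size (depending only on $G$, not on the input).

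The key characterization is that $t \equiv xpy$ holds \IFF there is a subset $S \sse V(D(t))$ which, via a label-preserving bijection, forms an induced subgraph isomorphic to $D(p)$, and which is \emph{convex} in $D(t)$: no vertex outside $S$ lies strictly between two vertices of $S$ in the trace partial order, i.e.{} the reachability relation of $D(t)$. One direction is clear by taking the $p$-vertices inside $D(xpy)$; for the other, a topological sort of $D(t)$ listing the vertices of $S$ consecutively exhibits a word of the required form. Since $(\alp,\alp) \notin I$, any two letters in $\Gama$ are dependent, so in any such $S$ the letters of type $\alp$ must form a contiguous block of the $\alp$-letters of $t$, starting at some index $k_\alp$: otherwise a skipped $\alp$-letter of $t$ would lie between two $\alp$-vertices of $S$ and violate convexity. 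The whole match is therefore encoded by the tuple $(k_\alp)_{\alp\in L}$, and since $|L|$ is a constant, the candidate tuples form a polynomial-size set that can be enumerated in \logs.

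For each candidate tuple the algorithm would verify three conditions: (a) for every $\alp$, the letters of $p|_\alp$ coincide with the corresponding letters of $t|_\alp$ at positions $k_\alp,\ldots,k_\alp+|p|_\alp-1$; each of the at most $|p|$ equality tests between letters of $\Gama$ is an instance of the \WP of the node group $\Ga$, which is in \logs by hypothesis; (b) for every pair of dependent positions $j<j'$ in $p$, the matched positions in $t$ preserve that order, which is what makes the induced subgraph on $S$ isomorphic to $D(p)$; (c) the set $S$ of matched positions is convex in $D(t)$.

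The hard part will be condition (c), since convexity amounts to reachability in $D(t)$, and reachability in a general DAG is not known to lie in \logs. I would handle it via the following short-path observation: on any directed path $u=w_0\to\cdots\to w_k=v$ in $D(t)$, if two vertices $w_i, w_j$ with $i<j$ carry the same type, then they are dependent and in word order, so $D(t)$ already contains the direct edge $w_i\to w_j$ and the path can be shortcut. Iterating this leaves a path whose sequence of types is a simple walk in $(L,D)$, hence of length at most $|L|$. Since only constantly many such type sequences exist, and for each one a leftmost-fit greedy sweep through $t$ decides in \logs whether it can be realized as a position-increasing sequence of letters of the prescribed types from $u$ to $v$, reachability in $D(t)$ -- and hence convexity of $S$ -- is decidable in \logs. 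Combining this with the \logs enumeration of tuples and the \logs checks (a)--(c) yields the claimed algorithm.
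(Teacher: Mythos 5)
Your proof is correct, and it follows the same basic strategy as the paper: exploit that $\abs L$ is a constant to enumerate polynomially many candidate anchor positions, then verify each candidate in \logs using the node groups' word problems for label comparisons. The details differ in two respects. The paper guesses the images in $D(t)$ of the \emph{minimal vertices} of $D(p)$ (at most $\abs L$ of them, since their types are pairwise independent), deletes everything not reachable from them, repeats symmetrically with maximal vertices, and then tests whether the surviving graph equals $D(p)$; you instead guess, for each type $\alp$, the starting index $k_\alp$ of the block of $\alp$-letters, and verify directly that the resulting set $S$ is label-isomorphic to $D(p)$ and convex. Your convexity characterization of factors (and the observation that convexity forces each type class of $S$ to be a contiguous block of the corresponding type class of $t$) is a clean substitute for the paper's prune-and-compare step. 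Your approach also buys something the paper glosses over with ``details are left to the reader'': both arguments need reachability in $D(t)$, which is not known to be in \logs for general DAGs, and your shortcut lemma --- any path can be contracted so that all types are distinct, hence shortest witnessing paths have length at most $\abs L$, a constant --- is exactly the missing justification and applies equally to the paper's reachability-based pruning. In short: same skeleton, a different (and somewhat more explicit) verification, with the reachability issue properly discharged.
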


\begin{proof}We may assume that $1 \leq \abs p$ and ${\abs p}_\alp \leq {\abs t}_\alp$ for all $\alp \in L$.  First, we compute $D(t) = [V,E,\lam]$
and $D(p) = [V',E',\lam']$. 
 Let $\abs t= \abs V = n$ and $M'\sse V'$ be the set of minimal vertices in $D(p)$. There are at most 
 $n^{\abs{ M'}}$ positions in $t$ which may correspond to $M'$. 
 The \logs procedure may investigate each of them one after another. 
 So we may think that a copy $M\sse V$ of $M'$ is fixed. 
 In a next round we keep in  $V$ only those vertices $u$ which can be reached 
 by a directed path from some vertex in $M$. All other vertices are deleted. 
 Now we run a symmetric procedure with maximal vertices.  After that we may assume that the sets of minimal vertices of $p$ and $t$ and the sets of maximal vertices of
   $p$ and $t$ coincide. However, now there are $x,y$ such that $t \equiv  xpy$ \IFF $D(t) = D(p)$. This can be checked in \logs because the \WP of all node groups $\Ga$ is in 
\logs. Details are left to the reader.
\end{proof}

%%%%%%%%%%%%%%%%%%%%%%%%%%%%%%%%%%%%%%%%%%%%%%%%%
\section{Conjugacy} \label{sec:conjugacy}
Two group elements $u,v \in G$ are \emph{conjugate} if there exists a $z \in G$ with $z^{-1}uz = v$ in $G$. If $u$ and $v$ are conjugate we write 
$u \sim v$.
The conjugacy problem is to decide on input 
words $u,v$ whether or not $u \sim v$ as elements of $G$. The aim of this section 
is to prove the following result.

\begin{theorem}\label{thm:conjug}
The \CP of a graph product % $G= G(L,I; (\Ga)_{\alp\in L}))$ 
is solvable in \logs \IFF the \CP of all node groups % $\Ga$ 
is in 
\logs. 
\end{theorem}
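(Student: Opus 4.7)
The easy direction uses retractions: for each $\alpha\in L$ the map $\pi_\alpha:G\to G_\alpha$ which is the identity on $G_\alpha$ and kills every generator of every other node group is a well-defined retraction. Applying $\pi_\alpha$ to a $G$-conjugator of $u,v\in G_\alpha$ yields a $G_\alpha$-conjugator, so $u\sim_G v\Leftrightarrow u\sim_{G_\alpha}v$ and the \CP of $G_\alpha$ reduces in \logs to that of $G$. For the hard direction, assume the \CP of each $G_\alpha$ is in \logs; \prref{cor:fitz} then gives the \WP of $G$ in \logs as well. The plan is to reduce the \CP of $G$ to pattern matching on cyclic normal forms of dependence graphs.

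First, apply \prref{thm:compred} to compute the reduced dependence graphs $D([u])$ and $D([v])$ in \logs. Next, compute a cyclic normal form $\widehat t$ of each reduced trace $t$, where ``cyclically reduced'' means no conjugate of $t$ has strictly smaller reduced length. The reduction proceeds node by node. For a fixed $\alpha\in L$ the letters of $\Gama$ appearing in $D([t])$ form a totally ordered chain $c_1<\cdots<c_k$ (any two elements of $\Gama$ are dependent). Compute the maximal prefix $c_1,\dots,c_p$ that is \emph{free on the left} (each $c_i$ becomes minimal in $D([t])$ after removing $c_1,\dots,c_{i-1}$) and the maximal suffix $c_{k-q+1},\dots,c_k$ that is \emph{free on the right}. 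The trace $t$ is conjugate to one in which this prefix and suffix have been cyclically amalgamated into the single element $[c_{k-q+1}\cdots c_k\, c_1\cdots c_p]\in\Ga$, which is erased (if it equals $1$) or kept as a single new vertex. Each atomic check is a \WP instance of $G$, hence in \logs by \prref{cor:fitz}; a final renormalization via \prref{thm:compred} produces $\widehat t$.

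The last step invokes the trace-theoretic conjugacy criterion for graph products (the analogue of Servatius' theorem for graph groups): two cyclically reduced traces $\widehat u$, $\widehat v$ are conjugate in $G$ \IFF their dependence graphs decompose into the same multiset of connected components $\widehat u=\widehat u_1\cdots \widehat u_r$, $\widehat v=\widehat v_1\cdots \widehat v_r$ with pairwise independent factors, and there is a bijection $\sigma$ with $\widehat u_i$ cyclically conjugate to $\widehat v_{\sigma(i)}$ as connected traces. This criterion can be proved by Bass-Serre induction on $\abs L$ along \prref{prop:decomp}. To verify it in \logs: connected components of a Hasse diagram are computable in \logs; two connected cyclically reduced traces of the same size are cyclic conjugates \IFF $D(\widehat v_j)$ appears as a factor of $D(\widehat u_i\widehat u_i)$, which is decidable in \logs by \prref{prop:patmat}. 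Since cyclic conjugacy is an equivalence relation, matching the two multisets reduces to comparing the sizes of equivalence classes, a \logs task.

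The main obstacle is the cyclic reduction step: one must arrange cyclic reduction so that it terminates in a bounded number of \logs rounds, rather than in an a priori unbounded sweep of elementary conjugations that would each call a \WP oracle. The key structural fact making this feasible is the chain property of $\Gama$-letters in a reduced dependence graph, which converts cyclic reduction at $\alpha$ into the computation of two monotone integer parameters (the ``left-free'' prefix length and the ``right-free'' suffix length), each a \logs word-problem query. Once this is justified, the other three stages combine routinely with \prref{thm:compred} and \prref{prop:patmat} to yield the claimed \logs algorithm for the \CP of $G$.
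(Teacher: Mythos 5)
Your overall architecture (reduce, cyclically reduce, split into connected components, decide by pattern matching on dependence graphs) is the same as the paper's, but three of the steps do not work as described, and the first is structural. Your algorithm for the hard direction never actually invokes the conjugacy oracles of the node groups --- you use the hypothesis only to obtain their word problems. This cannot be right: each $G_\alpha$ is a retract of $G$, so the \CP of $G_\alpha$ reduces to the \CP of $G$, and there are finitely generated groups (e.g.{} Miller's group, cited in the introduction) whose \WP is in \logs but whose \CP is undecidable. The place where the oracle is indispensable is precisely the case your ``trace-theoretic criterion'' gets wrong: a connected component supported on a single node, say a single vertex labelled $a \in \Gamma_\alpha$. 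For such components $u \sim v$ in $G$ \IFF $u \sim v$ in $G_\alpha$, which has nothing to do with transposition of traces. The paper isolates this as the case $\abs{\al(x)} \leq 1$ (step 5 of its algorithm, via \prref{lem:gunnar}) and proves the purely combinatorial characterization $x \sim y \iff x \approx y$ only for connected cyclically reduced words with $\abs{\al(x)} \geq 2$ (\prref{prop:cot}).

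Two further steps fail. First, your cyclic reduction does not terminate in a bounded number of \logs rounds: the per-node ``left-free prefix / right-free suffix'' amalgamation only strips $\Gamma_\alpha$-letters that are already minimal or maximal, so on a nested conjugator such as $a_1 b_1 a_2 b_2 \cdots w \cdots \ov{b_2}\,\ov{a_2}\,\ov{b_1}\,\ov{a_1}$ (two dependent nodes, alternating layers) each sweep over $L$ peels off exactly one layer, and $\Theta(n)$ sweeps are needed; a composition of linearly many \logs functions is not \logs. The paper's key idea here, which your proposal is missing, is the squaring trick of \prref{lem:nalja}: compute $[uu] \equiv prm[sr]ms\ov p$ by one application of \prref{thm:compred} and cut out the cyclically reduced core $m[sr]$ by pure $\alpha$-length arithmetic. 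Second, your pattern-matching test ``$\widehat v$ is a factor of $\widehat u\widehat u$'' is false. Take $L=\{1,2,3\}$ with $1$ and $3$ independent and $2$ dependent on both, and letters $a,b,c$ in the three node groups. Then $abc$ and $cba$ are connected, cyclically reduced and transposition-equivalent (via $abc \to cab \equiv acb \to cba$), but $cba$ is not a factor of $abcabc$: that trace contains no occurrence of a $c$ below a $b$ below an $a$. The correct criterion (\prref{prop:duboc}, after Duboc) asks whether $puq \equiv v^{\abs L}$, and establishing the exponent $\abs L$ (a diameter bound) is exactly the content of the paper's proof of that proposition.
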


%Before we prove \prref{thm:conjug} let us state two corollaries 
%which solve open problems for  
%important classes of finitely generated groups. 
The easy direction of \prref{thm:conjug} is the implication from left to right, because
for all $u\in \Ga$ 
and all $z\in G$ where the reduced dependence graph contains a vertex with a label in some $\Gb$ with 
$(\alp, \bet) \in D$ and
$\alp \neq \bet$ 
we have $zu\oi{z} \notin \Ga$. 
Hence, for $u,v\in \Ga$ we have $u\sim v$ in $\Ga$ if and only if $u\sim v$ in $G$.
Thus, we have to show only that if the \CP of all node groups $\Ga$ 
is in 
\logs then the \CP of  $G$ is in \logs.
Before we prove the other direction  let us state 
an immediate consequence of \prref{thm:conjug}. 
It solves an  open problem for  two
prominent classes of finitely generated groups.

\begin{corollary}\label{cor:conjug}
The \CP of a graph group or a right-angled Coxeter group can be solved in \logs.\end{corollary}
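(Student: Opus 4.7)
The plan is to derive this directly from \prref{thm:conjug}, by verifying the hypothesis for the two special cases.

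First I would recall the structural observations made earlier in the paper: a graph group (right-angled Artin group) is by definition a graph product whose node groups are all isomorphic to $\Z$, and a right-angled Coxeter group is a graph product whose node groups are all isomorphic to $\Z/2\Z$. Thus to invoke \prref{thm:conjug} in either setting, it suffices to show that the \CP of $\Z$ and of $\Z/2\Z$ can be solved in \logs. Since these node groups are abelian, two elements $u,v$ are conjugate if and only if $u = v$, so the \CP coincides with the \WP here.

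Next I would point out that the \WP in $\Z$ and $\Z/2\Z$ is trivially in \logs: given an input word $w \in \Sig^*$ over a generating set $\Sig$ of $\Z$ (respectively $\Z/2\Z$), deciding whether $w = 1$ amounts to checking whether a signed count of letter occurrences equals zero (respectively is even). Both tasks reduce to counting letters, which is the textbook example of a \logs-computable function. Comparing two such words is then a composition of two \logs-computable functions, hence still in \logs.

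Having verified the hypothesis, the forward implication of \prref{thm:conjug} immediately yields that the \CP of the ambient graph product is in \logs. I do not expect a genuine obstacle here; the work has been absorbed entirely into \prref{thm:conjug}, and the only thing to check is the triviality of the \CP in $\Z$ and $\Z/2\Z$, which is routine. The corollary is essentially an instance of a general transfer result applied to the simplest possible node groups.
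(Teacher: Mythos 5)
Your proposal is correct and follows exactly the route the paper intends: the corollary is stated as an immediate instance of \prref{thm:conjug}, with node groups $\Z$ (for graph groups) and $\Z/2\Z$ (for right-angled Coxeter groups), where conjugacy reduces to equality and hence to a trivially \logs word problem. There is nothing to add.
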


The proof of \prref{thm:conjug} (and its corollaries) covers the rest of this section. The \logs algorithm can be found at the end of this section, too. 
 Using \prref{thm:compred} we may assume that the input words
 $u,v\in \Gam^*$ are 
 reduced. Actually we work with their dependence graph representations. 
 Therefore it is convenient to have a special notation. 
 We write $w \equiv w'$ if $D(w)$ and $D(w')$ are isomorphic. 
 Recall that, if $w\in \Gam^*$ is reduced, then  $w \equiv w'$ if and only if 
 $w=w'$ in the graph product $G$. If $ w= u_0a_1u_1 \cdots a_nu_n$ is the 
$\alp$-factorization then we call $n$ the $\alp$-length. We denote it 
by $\abs{w}_\alp$. Thus $\abs{w}_\alp$ is the number of vertices in the 
dependence graph of $w$ having a label in $\Gama$.  For later use we also define the \emph{alphabet } of a word $w\in \Gam^*$  by
$\al(w) = \set{\alp \in L}{\abs{w}_\alp \geq 1}$. If $w$ is reduced, then 
it depends on the image in $G$, only. Thus, we can define the 
alphabet of group elements, too. We also say that 
a word is \emph{connected} if $\al(w)$ induces a connected subgraph in the dependence graph $(L, D)$. Assume that 
$\al(w)$ splits into connected components $A_1 \cup \cdots \cup A_k$. 
Then we have $w \equiv w_1 \cdots w_k$ with 
$\alp(w_i) = A_i$ and $w_iw_j \equiv  w_jw_i$ for all $1 \leq i < j \leq k$. If $w$ is reduced, then all $w_i$ are reduced. Therefore we can split every group element of $G$ into connected components which commute pairwise. 
Next, we use the following fact. 
 
\begin{lemma}\label{lem:cyc}
Let $u\in \Gam^*$ be reduced. Then there exists a unique minimal $\wt u$  such that $u \equiv p \wt u \ov p$ for some $p \in \Gam^*$.
\end{lemma}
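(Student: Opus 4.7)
The plan is to introduce the notion of a cyclically reduced word and to establish existence and uniqueness of $\wt u$ via a confluent extraction procedure on dependence graphs. Call a letter $a \in \Gam$ \emph{extractable} from a reduced word $w$ if $w \equiv a w' \ov a$ for some $w'$; equivalently, $D(w)$ has a minimal vertex labeled $a$ and a distinct maximal vertex labeled $\ov a$. Call $w$ \emph{cyclically reduced} if no letter is extractable from it. Because $u$ is reduced and $\equiv$ preserves dependence graphs, every decomposition $u \equiv p \wt u \ov p$ satisfies $\abs{u} = 2\abs{p} + \abs{\wt u}$, so minimizing $\abs{\wt u}$ is the same as maximizing $\abs{p}$, and a minimizer $\wt u$ must itself be cyclically reduced.

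For \emph{existence} I would proceed by induction on $\abs{u}$. If $u$ is already cyclically reduced take $\wt u = u$ and $p = 1$; otherwise extract some $a$ to obtain $u \equiv a u_1 \ov a$ with $\abs{u_1} = \abs u - 2$, and apply the induction hypothesis to $u_1$. A brief check is needed that $u_1$ is again reduced: the only vertices removed from $D(u)$ are the minimal $a$-vertex and the maximal $\ov a$-vertex, and since these are extremal no new Hasse-adjacent pair of same-$\Gam_\gam$-labeled vertices can arise in $D(u_1)$ through a Hasse path in $D(u)$ that used them as an intermediary.

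For \emph{uniqueness} I would prove that the extraction step is confluent; combined with termination (strict length decrease) Newman's lemma then yields a unique cyclically reduced form up to $\equiv$. Suppose $a$ and $b$ are both extractable from $u$. If $a,b \in \Gama$ for the same $\alp$, then since $\alp$ is self-dependent any two $\Gama$-labeled vertices in $D(u)$ are comparable, so at most one is minimal and at most one is maximal; hence $a = b$ and $\ov a = \ov b$, and the two extractions coincide. Otherwise $a \in \Gama$, $b \in \Gamb$ with $\alp \neq \bet$, and joint minimality of the two vertices forces $(\alp,\bet) \in I$. The four vertices for $a, \ov a, b, \ov b$ are pairwise distinct (any two have labels from different node groups or are min/max of the same group), and after removing the $a$- and $\ov a$-vertices the $b$-vertex remains minimal and the $\ov b$-vertex remains maximal, so $b$ is still extractable from $u_1$; symmetrically for the reverse order, and both orders produce the same word obtained by deleting all four vertices from $D(u)$. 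The \textbf{main obstacle} is precisely this confluence step — in particular, verifying that when the two extractable letters lie in different node groups these groups must be independent and that the four extremal vertices are genuinely distinct, so that the two extractions really commute.
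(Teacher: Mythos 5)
Your proposal is correct and takes essentially the same route as the paper: the paper likewise defines the one-step extraction $u \equiv a\, u_1\, \ov a \;\to\; u_1$, observes that two distinct extractable letters must lie in independent node groups so that the procedure is strongly confluent (and terminating), and concludes existence and uniqueness of $\wt u$. Your extra verifications --- that two extractable letters in the same $\Gama$ must coincide, and that deleting extremal vertices cannot create new Hasse adjacencies, so reducedness is preserved --- merely make explicit what the paper's one-line confluence claim leaves implicit.
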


\begin{proof}
 If there is no $a \in \Gama$ such that $u \equiv a u_1 \ov a$ then 
 we must choose $p=1$. Otherwise we rewrite $u$ into $u_1$. 
 If we have also $b \in \Gamb$ such that $u \equiv b u_2 \ov b$ and 
 $a\neq b$ then we have $(\alp, \bet) \in I$ and $u\equiv ab u_3\ov b\,  \ov a$.
 Thus the rewriting procedure is strongly confluent and therefore $p$ exists and the reduced dependence graph 
 of $\wt u \in \Gam^*$ is uniquely defined by $u$. 
  \end{proof}

We say that a reduced word $u$ is \emph{cyclically reduced} if 
the dependence graph of $u$ does not contain any minimal vertex $i$ and any 
maximal vertex $j$ such that $i \neq j$ but $\lam(i), \lam (j) \in \Gama$
for some $\alp \in L$. Thus, a reduced word $u$ is not cyclically reduced \IFF 
$u\equiv a u'a'$ for some $a,a' \in \Gama$. If $u$ is not cyclically reduced then $u\sim u'[a'a]$ and the length of  $u'[a'a]$ is shorter than $u$. 
Therefore it is enough to solve the \CP for cyclically reduced words, but the initial problem  is to compute them in \logs. The key observation 
to overcome this difficulty is the following lemma. 

\begin{lemma}\label{lem:fred}
Let $u\in \Gam^*$ be reduced. Then there are reduced words $p,r,m,s$ 
such that 
\begin{itemize}
\item $u \equiv prms\ov p$
\item $\abs{[sr]}_\alp =\abs{r}_\alp = \abs{s}_\alp \leq 1$ for all $\alp \in L$
\item $m[sr]$ is cyclically reduced and $u \sim m[sr]$. 
\end{itemize}
\end{lemma}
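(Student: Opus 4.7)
The plan is to apply \prref{lem:cyc} first to obtain $u \equiv p\wt u \ov p$ with $\wt u$ peeling-minimal, and then to extract the cyclic residues from $\wt u$ to form $r$ and $s$. Since $\wt u$ cannot be written as $c v \ov c$, it will suffice to produce a decomposition $\wt u \equiv r m s$ of the required shape; then $u \equiv p r m s \ov p$ and the conjugations $u \sim \wt u = rms \sim msr = m[sr]$ will deliver $u \sim m[sr]$.

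For the decomposition of $\wt u$, I observe that in $D(\wt u)$ each type $\alp$ admits at most one global minimum vertex and at most one global maximum vertex (two of the same type are dependent and hence totally ordered). Let $R \sse L$ be the set of types $\alp$ for which $D(\wt u)$ has \emph{distinct} global min and max vertices of type $\alp$, and write $a_\alp, b_\alp \in \Ga$ for their labels. Peeling-minimality forces $a_\alp b_\alp \neq 1$: otherwise one could sort $\wt u \equiv a_\alp v \ov{a_\alp}$ and peel further. A key observation is that the types in $R$ are pairwise independent: if $\alp \neq \bet$ both lay in $R$ and were dependent, the two global max vertices would be joined by an arc, and whichever came first would have a successor, contradicting maximality. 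Consequently the global min-$R$ vertices commute pairwise, and likewise the global max-$R$ vertices, so $r := \prod_{\alp \in R} a_\alp$ and $s := \prod_{\alp \in R} b_\alp$ are well-defined independently of the order, with $\abs{r}_\alp = \abs{s}_\alp \leq 1$ and $[sr] = \prod_{\alp \in R} [b_\alp a_\alp]$ satisfying $\abs{[sr]}_\alp \leq 1$. A topological sort of $D(\wt u)$ placing all global min-$R$ vertices first and all global max-$R$ vertices last then exhibits $\wt u \equiv r m s$ with $m$ reduced, since any forbidden factor of $m$ would lift to one of $\wt u$.

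The main obstacle is verifying that $m[sr]$ is cyclically reduced, which I plan to show by contradiction. Suppose some type $\gam$ admits distinct global min and max vertices in $D(m[sr])$. If the max lies in $[sr]$, then $\gam \in R$ and the global min vertex $a_\gam^*$ of type $\gam$ in $\wt u$ was removed, so the min of type $\gam$ in $m$ must be a different vertex $v$ whose $\wt u$-predecessors are all global min-$R$ vertices (max-$R$ vertices have no successors and cannot be predecessors). Applying \prref{lem:charred} to $\wt u$ forces a letter strictly between $a_\gam^*$ and $v$ dependent on $\gam$; this letter is a predecessor of $v$, hence a global min-$R$ vertex of some type $\alp \in R \setminus \{\gam\}$ dependent on $\gam$---contradicting the pairwise independence of $R$. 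If instead the max-$\gam$ lies in $m$, then $\gam \notin R$ and $\gam$ is independent of every type in $R$ (otherwise an $[sr]$-successor would prevent it from being max in $m[sr]$); the $\wt u$-successors of this vertex must then all be global max-$R$ vertices, forcing $\gam$ dependent on some $R$-type, unless the vertex is itself a global max of $\wt u$. In that remaining case a symmetric reducedness-and-independence argument applied to the min-$\gam$ vertex in $m$, which must have a removed min-$R$ predecessor of some $\gam$-dependent type, yields the same contradiction. Hence no such $\gam$ exists and $m[sr]$ is cyclically reduced.
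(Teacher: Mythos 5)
Your proposal is correct and takes essentially the same route as the paper's proof: peel off $p$ via \prref{lem:cyc}, form $r$ and $s$ from the (pairwise independent) types that contribute both a minimal and a maximal vertex to $\wt u$, and use the minimality of $\wt u$ to conclude $[b_\alp a_\alp]\neq 1$. The only difference is one of detail: you spell out the case analysis behind the claim that $m[sr]$ is cyclically reduced, which the paper dispatches in a single sentence by appealing to the maximality of $r$.
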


\begin{proof}
 Choose $p$ and $\wt u$ according to \prref{lem:cyc}. Next, there is a unique maximal $r$ such that 
 $\wt u\equiv rms$ with $\abs{[sr]}_\alp = \abs{r}_\alp \leq 1$ for all $\alp \in L$. This follows because $p$ has maximal length. Actually 
 for some subset $M \sse L$ of pairwise independent nodes 
 we have $r \equiv  \prod_{\alp\in M}a_\alp$ and $s \equiv  \prod_{\alp\in M}b_\alp$ such that $a_\alp$, $b_\alp \in \Gama$. Moreover 
 $[b_\alp a_\alp] \in \Gama$.  
 Thus, $m[sr] \equiv  m \prod_{\alp\in M}[b_\alp a_\alp] $ is reduced. 
 It is cyclically reduced because $r$ has been chosen to be maximal and 
 $u$ is reduced. 
 The assertion $u \sim m[sr]$ is trivial. 
 \end{proof}
%%%%%%%%%%%%%%%%%%%

\begin{lemma}\label{lem:nalja}
There is a \logs computation which on input $u \in \Gam^*$ outputs 
a cyclically reduced word $u'$ such that $u \sim u'$. 
\end{lemma}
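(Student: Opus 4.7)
The plan is to combine Theorem \ref{thm:compred} with the structural decomposition of Lemma \ref{lem:fred}. First, I would apply Theorem \ref{thm:compred} to replace $u$ by its reduced dependence graph $D([u])$, a step that is already \logs-computable. By Lemma \ref{lem:fred}, there exist reduced words $p, r, m, s$ with $u \equiv prms\ov p$, $\abs{r}_\alp = \abs{s}_\alp \leq 1$ for all $\alp \in L$, and $m[sr]$ a cyclically reduced word conjugate to $u$. Hence the task reduces to computing $m[sr]$ from $D([u])$ in \logs.

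To produce $m[sr]$, the key observation is that the decomposition $u \equiv prms\ov p$ is witnessed by a structural partition of the vertex set of $D([u])$. The vertices forming $p$ and those forming $\ov p$ are in bijection via a label-inverting map (sending $a \in \Gama$ to $\ov a$) that reverses the induced partial order; by the uniqueness asserted in Lemma \ref{lem:cyc}, this matching is determined by $u$. I would classify each vertex of $D([u])$ as belonging to the part corresponding to $p$, $\ov p$, $r$, $s$, or $m$ via local conditions checkable in \logs, then merge the matched pairs from $r$ and $s$ into single letters $[b_\alp a_\alp]$ to obtain the dependence graph of $m[sr]$.

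The main obstacle is to carry out this vertex classification within logspace, since a naive iterative peeling procedure, stripping off one layer of $p$ at a time, would compose too many logspace steps and exceed the space bound. The way around this is to exploit the strong confluence established in the proof of Lemma \ref{lem:cyc}: the set of peelable vertices is uniquely determined, so it can be described by a single structural condition verifiable in \logs via reachability-style checks on $D([u])$, without iterating over peeling layers.
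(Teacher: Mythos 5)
There is a genuine gap, and it sits exactly where you placed your hand-wave. You correctly reduce the problem to extracting the factor $m[sr]$ from the decomposition $u\equiv prms\ov p$ of \prref{lem:fred}, and you correctly identify the obstacle: the peeling of \prref{lem:cyc} is an inherently iterative process whose depth can be linear in $\abs u$, so composing one \logs step per layer is not \logs. But your proposed escape --- that the $p$/$\ov p$/$r$/$s$/$m$ classification of vertices of $D([u])$ is ``described by a single structural condition verifiable in \logs via reachability-style checks'' --- is asserted, not proved. No such condition is exhibited, and it is not at all clear one exists: whether a minimal vertex labelled $a$ is eventually peeled depends on whether it can be matched to a maximal vertex labelled $\ov a$ \emph{in some residual graph obtained after earlier peelings}, which is a recursively defined, global property, not a local one. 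Strong confluence (from the proof of \prref{lem:cyc}) only gives you uniqueness of the end result $\wt u$; uniqueness of an object computed by a long iteration does not, by itself, yield a \logs algorithm for it. As written, the hard part of the lemma is exactly the part you have not supplied.

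The paper avoids this classification problem entirely with a different trick: it reduces the \emph{square} $w=uu$. Since $u\equiv prms\ov p$ with $m[sr]$ and $[sr]m$ cyclically reduced, the reduced form is $[w]\equiv prm[sr]ms\ov p$, obtained by one application of \prref{thm:compred}. The factor $m[sr]$ is then located inside $[w]$ by pure counting: with $n=\abs{u}_\alp$, $k=\abs{[w]}_\alp$ and $\eps=\abs{r}_\alp\in\{0,1\}$ one has $2n=4\abs{p}_\alp+2\abs{m}_\alp+4\eps$ and $k=2\abs{p}_\alp+2\abs{m}_\alp+3\eps$, so $2n-k=2\abs{p}_\alp+\eps$; the parity of $k$ determines $\eps$, hence $\abs{p}_\alp$ and $\abs{m}_\alp$, and the vertices of $m[sr]$ are picked out by their $\alp$-indices. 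This replaces your unproven structural characterization with arithmetic on $\alp$-lengths, which is trivially in \logs. If you want to rescue your approach you would need to actually produce and justify the \logs-checkable membership condition for the $p$-part; otherwise you should adopt the squaring argument.
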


\begin{proof}
 The idea is to compute a cyclically reduced word $ m[sr]$ with 
$u \sim m[sr]$  by reducing $w = uu$. 
Let us see what happens if we start the reduction process 
on $w= uu$ where, according to  \prref{lem:fred}, $u\equiv prms\ov p$ is reduced. We can write 
$w \equiv  prms\ov p  prms\ov p$. The word  $prm[sr]ms\ov p$ is a reduced word because $m[sr]$ and $[sr]m$ are cyclically reduced. Hence
$[w] \equiv  prm[sr]ms\ov p$. In order to determine the 
factor $m[sr]$ inside $[w]$ we compute for each 
$\alp \in L$ the $\alp$-lengths $\abs{w}_\alp$, $\abs{[w]}_\alp$,
$\abs{p}_\alp$, and $\abs{r}_\alp$. In a first \logs computation we 
determine the $\alp$-factorization of $u$. This gives us $n\in \N$ with $n= \abs{u}_\alp$
and therefore $2n =\abs{w}_\alp$. A second \logs computation using 
\prref{thm:compred} yields $k\in \N$ with $k= \abs{[w]}_\alp$.
Let $\eps =\abs{r}_\alp$. We know  
$\eps =\abs{[sr]}_\alp =\abs{r}_\alp = \abs{s}_\alp \leq 1$. We obtain 
\begin{align*}
2n &= 4\abs{p}_\alp + 2 \abs{m}_\alp + 4 \eps\\
k &= 2\abs{p}_\alp + 2 \abs{m}_\alp + 3 \eps
\end{align*}
 Thus, $2n -k = 2\abs{p}_\alp + \eps$. If $k$ is even then $\eps = 0$ 
 otherwise $\eps = 1$. Knowing $\eps= \abs{r}_\alp$ we know $\abs{p}_\alp$ and $\abs{m}_\alp$, too. We conclude that
 the $i$-th vertex of $D([w])$ which has a label in $\Gama$ belongs to the factor $m[sr]$  \IFF $$\abs{p}_\alp + \eps <i < k - \abs{p}_\alp- \eps - \abs{m}_\alp.$$ 
\end{proof}

\begin{lemma}\label{lem:hugo} Let $x,y\in \Gam^*$ be cyclically reduced such that 
$x \sim y$. Then $\abs{x}_\alp = \abs{y}_\alp$ for all $\alp \in L$. In particular, 
$\al(x) = \al(y)$.  
\end{lemma}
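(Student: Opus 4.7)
The plan is to establish the stronger statement that if $x, y \in \Gam^*$ are both cyclically reduced with $x \sim y$, then $D(y)$ can be obtained from $D(x)$ by a sequence of cyclic rotations of the form $uv \mapsto vu$ (where at each intermediate stage the word remains cyclically reduced). Since such a rotation manifestly preserves the multiset of vertex labels of the dependence graph, it preserves $\abs{w}_\alp$ for every $\alp \in L$, and the conclusion follows immediately. The ``in particular'' clause $\al(x) = \al(y)$ is just the special case $\abs{x}_\alp \geq 1 \iff \abs{y}_\alp \geq 1$.

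I would proceed by induction on $\abs{z}$, where $z \in \Gam^*$ is a reduced word witnessing $y = \ov{z} x z$ in $G$. The base case $\abs{z} = 0$ is immediate from \prref{prop:dieloh}: we have $x = y$ in $G$ and both are reduced, so $D(x) = D(y)$ up to isomorphism, giving $\abs{x}_\alp = \abs{y}_\alp$ for all $\alp$.

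For the inductive step, write $z = a z'$ for a single letter $a \in \Gam_{\alp_0}$ and set $x_1 = [\ov{a} x a]$. The heart of the argument is the single-letter case. For $\bet \neq \alp_0$ one has $\abs{x_1}_\bet = \abs{x}_\bet$ trivially, since reduction of $\ov a x a$ only affects letters of $\Gam_{\alp_0}$. For $\bet = \alp_0$, I would exploit the fact that a cyclically reduced $x$ has at most one minimal and at most one maximal vertex of $D(x)$ labeled in $\Gam_{\alp_0}$, and that they must coincide if both exist. A case analysis on the existence of such extremal $\Gam_{\alp_0}$-vertices then shows that either $x_1 \equiv x$ in $G$ (a single $\Gam_{\alp_0}$-letter gets ``rotated around''), or that applying \prref{lem:fred} to cyclically reduce $x_1$ produces a cyclically reduced word $x_1'$ whose dependence graph is obtained from $D(x)$ by a cyclic rotation of a single $\Gam_{\alp_0}$-vertex. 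In every subcase one reads off $\abs{x_1'}_\alp = \abs{x}_\alp$ for all $\alp \in L$.

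The main obstacle is precisely this case analysis when $x_1$ fails to be cyclically reduced, since the inductive hypothesis applies only to cyclically reduced intermediates. I would resolve this by packaging the single-letter conjugation together with the cyclic reduction of \prref{lem:fred}: given the shorter conjugator $z'$ and the cyclically reduced $x_1'$, the relation $y = \ov{z''} x_1' z''$ holds for a suitable reduced $z''$ with $\abs{z''} < \abs{z}$, so the inductive hypothesis yields $\abs{x_1'}_\alp = \abs{y}_\alp$; combined with $\abs{x_1'}_\alp = \abs{x}_\alp$ from the single-letter analysis, this closes the induction.
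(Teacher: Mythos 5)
Your headline reduction is to a statement the paper explicitly refutes just before \prref{prop:duboc}: conjugate cyclically reduced words need \emph{not} be related by a chain of rotations $uv \mapsto vu$. If $a \sim a'$ in a node group $\Ga$ with $a \neq a'$, then $a$ and $a'$ are cyclically reduced and conjugate in $G$, but a rotation preserves every vertex label, so $D(a)$ can never become $D(a')$. Conjugation can change the \emph{label} of a vertex while keeping it in the same $\Gama$, and the same defect already infects your single-letter dichotomy: for $x = c \in \Gam_{\alp_0}$ and $a \in \Gam_{\alp_0}$, the word $x_1 = [\ov a c a]$ is a single vertex whose label in general differs from $c$, so it is neither equal to $x$ in $G$ nor a rotation of $D(x)$. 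For the present lemma this is repairable, since $\abs{\cdot}_\alp$ only sees the node index of a label and not the label itself; but the invariant you announce is strictly stronger than what is true and must be weakened to ``rotation or relabelling within the same node group''.

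The more serious gap is that your induction need not decrease. Suppose the peeled letter $a \in \Gam_{\alp_0}$ neither cancels nor merges with any letter of $x$ (for instance, no $\Gam_{\alp_0}$-vertex of $D(x)$ is minimal and none is maximal). Then $x_1 = \ov a x a$ is reduced but not cyclically reduced; applying \prref{lem:fred} gives $p = \ov a$ and $r = s = 1$ (nontrivial $r,s$ would contradict cyclic reducedness of $x$), hence $x_1' = x$, and the conjugator you must hand to the inductive hypothesis is $[a z'] = z$ itself --- the same length, so no progress is made. The inequality $\abs{z''} < \abs{z}$ is asserted but never justified, and it fails exactly here; worse, this is the generic situation, since the proof of \prref{prop:cot} shows that for a minimal-length conjugator no cancellation with $x$ occurs at all. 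What is missing is a base case handled directly rather than recursively: when $xz$ is reduced one has $xz \equiv zy$ by \prref{prop:dieloh}, and $\abs{x}_\alp + \abs{z}_\alp = \abs{z}_\alp + \abs{y}_\alp$ is read off by counting vertices. The paper's proof is organized precisely this way: it fixes $z$ of \emph{minimal} length and shows that any cancellation $x \equiv x'a$, $z \equiv a'z'$ either lets one pass to the transposed (hence automatically still cyclically reduced) word $ax'$ with the strictly shorter conjugator $z'$, or contradicts minimality of $z$ together with cyclic reducedness of $x$. Peeling letters off $x$ along an actual cancellation, rather than conjugating all of $x$ by the first letter of $z$ and re-reducing, is what makes the induction terminate.
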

\begin{proof}
Let  $z\in \Gamma^*$ be reduced of minimal length such that $xz=zy \in G$. Assume by contradiction that there exists some $\alp\in \al(x) \sm \al(y)$. On the right hand side 
no reduction can involve letters from $\Gama$, hence $\abs{zy}_\alp = \abs{[zy]}_\alp = \abs{[z]}_\alp$, but $\abs{xz}_\alp \geq 1+ \abs{z}_\alp$.  Hence a reduction between $x$ and $z$ must occur. Hence there exist $a,a' \in \Gama$ such that we can write
 $x \equiv x'a$, $z \equiv a'z'$. If $aa' = 1 \in G$ then 
$ax'z'  = axz = azy = z'y$ and hence $ax'z' = z'y$. Since $x$ is cyclically reduced, the word $ax'$ is reduced, too.  
By induction on the length of $z$ we obtain 
$\abs{ax'}_\alp = \abs{y}_\alp$ and $\abs{y}_\alp \geq 1$ which is a contradiction. 
Thus, $aa' \neq  1 \in G$  and $[aa'] \in \Gama$.
Therefore, the $\alp$-length of $x'[aa']z'$ is equal to the $\alp$-length of $z = a' z'$. 
As $a'$ is minimal in $z$ we conclude $[aa'] = a'$, hence $a = 1 \in G$, which is again a contradiction. 
\end{proof}

For a subset $C \sse L$ let $G_C$ be the graph product which is defined with respect to the independence relation  $(C,I_C)$ where $I_C= I \cap C \times C$. 
Recall that $G_C$ is a retract of $G$ with respect to the canonical projection 
$\pi_C: G \to G_C$ since $\pi_C(g) = g$ for all $g \in G_C$. 
The following lemma shows that it is enough to decide conjugacy on connected words. 

\begin{lemma}\label{lem:gunnar}
Let $x,y\in \Gam^*$ be reduced such that  $\al(x) = \al(y)= A\cup B$ with $A\times B \subseteq I$.
Write $x = x_Ax_B, y=y_Ay_B$ with $\al(x_C)=\al(y_C) = C$ for $C\in \{A,B\}$. Then 
we have $ x\sim y $ in $G$ \IFF $x_C\sim y_C$ in $G$ for $C\in \{A,B\}$.
Moreover, $x_C\sim y_C$ in $G$ \IFF $x_C\sim y_C$ in $G_C$.
\end{lemma}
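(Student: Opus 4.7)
The plan rests on the canonical retractions $\pi_A : G \to G_A$ and $\pi_B : G \to G_B$, which exist by the universal property of the graph product: for each $C \in \{A,B\}$ send every letter in $\Gama$ with $\alp \in C$ to itself and every other generator to $1$. Because $A \times B \subseteq I$, all commutation relations of $G$ are preserved under this assignment (an $A$-generator and a $B$-generator already commute, so collapsing one side causes no inconsistency), so $\pi_C$ is well-defined and restricts to the identity on $G_C$. In particular $\pi_A(x) = \pi_A(x_A)\pi_A(x_B) = x_A$, and symmetrically $\pi_A(y) = y_A$, $\pi_B(x) = x_B$, $\pi_B(y) = y_B$.

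I would first dispose of the \emph{moreover} clause. The implication ``$x_C \sim y_C$ in $G_C$ implies $x_C \sim y_C$ in $G$'' is trivial from $G_C \leq G$. Conversely, if some $z \in G$ satisfies $z^{-1} x_C z = y_C$, then applying $\pi_C$ yields $\pi_C(z)^{-1} x_C\, \pi_C(z) = y_C$ with $\pi_C(z) \in G_C$, witnessing conjugacy inside $G_C$.

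The forward direction of the main equivalence is handled the same way. From $z^{-1} x z = y$ in $G$ I apply $\pi_A$ and $\pi_B$ separately to obtain $\pi_A(z)^{-1} x_A\, \pi_A(z) = y_A$ and $\pi_B(z)^{-1} x_B\, \pi_B(z) = y_B$, yielding $x_C \sim y_C$ in $G$ for both $C \in \{A,B\}$.

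For the backward direction, using the moreover clause already proved, I can pick conjugators $z_A \in G_A$ and $z_B \in G_B$ with $z_C^{-1} x_C z_C = y_C$. Since $A \times B \subseteq I$, every element of $G_A$ commutes with every element of $G_B$; setting $z = z_A z_B$ and reshuffling the $A$-factors past the $B$-factors a short calculation gives $z^{-1} x z = y$. No step is deep; the only subtlety is verifying that the retraction $\pi_C$ really exists, which is precisely where the hypothesis $A \times B \subseteq I$ is needed.
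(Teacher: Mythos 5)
Your proposal is correct and follows essentially the same route as the paper: both arguments hinge on the canonical retractions $\pi_A,\pi_B$, proving the forward direction (and the ``moreover'' clause) by projecting the conjugator, and the backward direction by multiplying conjugators taken from the commuting subgroups $G_A$ and $G_B$. The only cosmetic difference is that you first establish the ``moreover'' clause and then use it to place the conjugators inside $G_A$ and $G_B$, whereas the paper projects arbitrary conjugators on the spot; the content is the same.
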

\begin{proof} 
Consider the canonical projection $\pi_C : G \to G_C$ and let $z_C= \pi_C(z)$.
If $xz = zy$, then $x_Cz_C = \pi_C(xz) = \pi_C(zy) = z_C y_C$. Hence $x_C\sim y_C$ in $G_C$. This implies  $x_C\sim y_C$ in $G$. Now, let $x_C\sim y_C$ in $G$ for $C\in \{A,B\}$. Choose $z'$ and $z''$ such that $x_Az' = z'y_A$ and $x_Bz'' = z''y_B$. 
We obtain $x_A\pi_A(z') = \pi_A(z')y_A$ and $x_B\pi_B(z'') = \pi_B(z'')y_B$.
It follows $xz = zy \in G$ for $z = \pi_A(z')\pi_B(z'')$. 
\end{proof}

We are now ready to prove the remaining implication of \prref{thm:conjug}. 
For this we may assume that the \CP in all $G_\alp$ is solvable in \logs. 
In order  to solve conjugacy in \logs for $G$,  it is enough 
to consider  cyclically reduced and connected input words $x$ and $y$ such that $\al(x) = \al(y)$. Let us consider the special case where $\abs{\al(x)} \leq  1$ first. Then we have $x$, $y \in \Ga$ for some $\alp \in L$. Another consequence of 
\prref{lem:gunnar} is that now $x \sim y$ in $G$ \IFF $x \sim y$ in $\Ga$. This is the only place where we use that the conjugacy problem is solvable in \logs for all 
node groups. Thus, we may assume that  $\abs{\al(x)} \geq 2$. 
This leads us to combinatorics on dependence graphs in the spirit of %``The Book of Traces''
\cite{dr95}. 

Let us define the notion of \emph{transposition}. We say that 
words $u,v  \in \Gam^*$ are \emph{transposed} if there are 
$r,s \in \Gam^*$ such that $u\equiv rs$ and $v \equiv sr$.
Thus, the definition is based on dependence graphs. Transposition is a reflexive and symmetric 
relation. But unlike the usual definition for words it is not transitive, in general. (The usual definition is the special case where $I$ is empty.) By $u \approx v$ we denote the transitive closure 
of transposition. We can view $\approx$ as an equivalence relation on dependence graphs. A crucial observation is that if $u \approx v$ 
and $u$ is cyclically reduced then $v$ is cyclically reduced, too. 
In the following, we consider $\approx$ only for cyclically reduced words. 
Clearly, if $u\approx v$, then $u\sim v$ in $G$, but the converse does not hold in general. To see this let $a \sim a'$ in some $\Ga$ with 
$a \neq a'$. Then $a,a' \in \Gama$ are cyclically reduced, but $a \not\approx a'$. Using transpositions on cyclically reduced words we never can multiply letters together which are neighbors in the 
Hasse diagram and we obtain an analogue to  Duboc's classical result  which characterizes $\approx$ for partially commutative monoids \cite{dub86tcs1}. 

\begin{proposition}\label{prop:duboc}
Let  $u,v  \in \Gam^*$ be cyclically reduced words. Then we have 
$u \approx v$ \IFF the following two conditions hold. First, 
we have ${\abs u}_\alp =  {\abs v}_\alp$ for all $\alp \in L$ and second, there are reduced words $p,q$ such that
$puq \equiv v^{\abs L}$. 
\end{proposition}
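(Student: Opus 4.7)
The plan is to establish the two directions separately, working entirely at the level of dependence graphs.

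For the forward direction $(\Rightarrow)$, suppose $u \approx v$ via a chain of transpositions $u = u_0, u_1, \ldots, u_n = v$. The letter-count condition is immediate: a single transposition $u_i \equiv r_i s_i$, $u_{i+1} \equiv s_i r_i$ clearly preserves $|\cdot|_\alpha$ for every $\alpha \in L$, and this extends to the transitive closure. For the factor property, the base observation is the identity
\[
v^2 \equiv sr \cdot sr \equiv s(rs)r \equiv s\,u\,r,
\]
so every one-step transposition yields $p,q$ with $puq \equiv v^2$. The nontrivial part is passing from a chain of transpositions to the uniform bound $|L|$. I would argue as follows: a cyclically reduced trace has, for each $\alpha \in \alph(u) \subseteq L$, a totally ordered set of $\alpha$-vertices in $D(u)$; rotating past one $\alpha$-block is one ``cyclic shift'' contributing one to the exponent. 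Any cyclic conjugate of $v$ in the trace sense can be reached by rotating past the $\alpha$-vertices for each $\alpha \in \alph(u)$ at most once, so at most $|\alph(u)| \le |L|$ full periods of $v$ suffice to see $u$ as a factor. Combined with $|u|_\alpha = |v|_\alpha$, this yields reduced $p, q$ with $puq \equiv v^{|L|}$.

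For the backward direction $(\Leftarrow)$, suppose the two conditions hold. From $puq \equiv v^{|L|}$ we apply the trace-monoid analogue of Levi's Lemma to align the factorizations of $p, u, q$ with the $|L|$ occurrences of $v$ inside $v^{|L|}$. The letter-count equality $|u|_\alpha = |v|_\alpha$ forces that, for each $\alpha$, the $\alpha$-vertices of $u$ correspond exactly to one full $\alpha$-period of $v^{|L|}$. Walking left-to-right along $v^{|L|}$, I locate the leftmost occurrence of each $\alpha$-letter inside the factor occupied by $u$ and read off a factorization $v \equiv r_0 s_0$ with $u \equiv s_0 v_1 r_0$ where $v_1$ is cyclically reduced with the same letter counts as $v$. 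By construction $v \approx v_1$ via one transposition, and $v_1$ still fits as a factor of $v^{|L|}$ (absorbing the ``shift'' by $r_0 s_0$), so we recurse: after at most $|L|$ such peelings the remainder coincides with $v$, producing a chain of transpositions $u \approx v_1 \approx v_2 \approx \cdots \approx v$.

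The main obstacle is the precise exponent $|L|$: both directions hinge on the combinatorial fact that the cyclic dependence graph of a cyclically reduced trace is completely determined by seeing its unfolding over $|L|$ periods, no more and no less. In the forward direction this controls how many copies of $v$ are needed to realize a cyclic rearrangement; in the backward direction it is exactly what lets Levi's Lemma peel off one transposition per $\alpha \in \alph(u)$. Once this is in hand, the rest of the proof reduces to bookkeeping with dependence graphs, in the spirit of Duboc's classical argument for free partially commutative monoids \cite{dub86tcs1}.
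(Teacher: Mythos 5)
Your high-level skeleton matches the paper's (letter counts are easy; the substance is the factor-of-a-power condition in both directions), but in both directions the step that carries the actual combinatorial content is asserted rather than proved. In the forward direction, the one-step identity is fine (modulo a slip: with $v\equiv rs$ and $u\equiv sr$ you want $v^2\equiv rs\cdot rs\equiv r(sr)s\equiv rus$; what you wrote, $sr\cdot sr$, is $u^2$), but the claim that ``any cyclic conjugate of $v$ can be reached by rotating past the $\alpha$-vertices for each $\alpha$ at most once, so at most $\abs L$ periods suffice'' is essentially a restatement of the conclusion. Since $\approx$ is the transitive closure of transposition, the chain $u=u_0,\dots,u_n=v$ may be arbitrarily long, and the transposition-by-transposition induction only yields $puq\equiv v^{n+1}$; the whole difficulty is compressing the exponent to $\abs L$. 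The paper does this by assuming $u$ connected, placing a minimal vertex $i_0$ of $u$ in the first period, and proving by induction on the distance from $i_0$ to $j$ in $D(u)$ that $j$ lies within that many periods plus one; the induction step is precisely where ${\abs u}_\bet={\abs v}_\bet$ is used (a $D$-neighbour cannot be pushed more than one period further, because each period contributes only ${\abs v}_\bet$ vertices of $\Gamb$ and all intermediate ones must belong to the factor $u$). None of this machinery (connectivity, minimal vertex, equal multiplicities) appears in your sketch, and the bound is not free: for a path-shaped dependence alphabet $a_1-\cdots-a_n$ with $v=a_1\cdots a_n$ and $u=a_n\cdots a_1$ one has $u\approx v$ but $u$ is not a factor of $v^2$, so the exponent genuinely grows with $\abs L$.

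The backward direction has a concrete inconsistency: from $v\equiv r_0s_0$ and $u\equiv s_0v_1r_0$ with ${\abs {v_1}}_\alp={\abs v}_\alp$, counting $\alpha$-vertices against the hypothesis ${\abs u}_\alp={\abs v}_\alp$ forces $s_0\equiv r_0\equiv 1$, so your recursion never peels anything off. Moreover an occurrence of $u$ inside $v^{\abs L}$ need not contain a full copy of $v$ at all (in the path example $u$ takes one letter from each of $n$ consecutive copies), so a factorization of the shape you describe does not exist in general. The paper instead inducts on $\abs p$: a minimal vertex $a$ of $p$ is minimal in $v^k$, hence minimal in $v$, so $v\equiv av'$ and $p'uqa\equiv(v'a)^k$; this shortens $p$ while replacing $v$ by its one-step transposition $v'a$, and when $p$ becomes empty the letter-count equality forces $u$ to coincide with the current rotation of $v$. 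Some induction of this kind is needed; as written, neither direction of your argument goes through.
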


\begin{proof}
Duboc's result \cite{dub86tcs1} (see also  \cite[Thm.~3.3.3]{dr95})  is stated for  Mazurkiewicz traces. It can be applied because $u,v  \in \Gam^*$ are cyclically reduced. Actually 
her proof can be applied verbatim in our setting. 

First, let $u \approx v$. 
It follows ${\abs u}_\alp =  {\abs v}_\alp$ for all $\alp \in L$. We may assume $u \not\equiv v$ and we use induction on the number of transpositions to transform $v$ into  $u$. Since $u \not\equiv v$ there are $r$, $s$ such that 
$v \equiv rs$ and  such that the number of transpositions to transform $sr$ into  $u$ has decreased. By induction, there are reduced words $p',q'$ such that
$p'uq' \equiv (sr)^k$ for some $k \in \N$. Let $p =rp'$ and $q= q's$ then 
we see $puq \equiv (rs)^{k+1}$.  It remains to show that 
we can bound the exponent $k$ by $\abs L$. To see this let 
$puq \equiv v_1 \cdots v_k$ for some $k$ such that each $v_\ell \equiv v$.
Without restriction, $u$ is connected.
A minimal vertex $i_0$ of $u$ must be located in  $v_1$. Now, for a vertex $j$ 
of $u$ we let $d(i_0,j)$ be the length of a shortest path from $i_0$ to $j$ 
in the dependence graph $D(u)$. We claim, that if  $d = d(i_0,j)$, then 
$j$ appears as a vertex in the prefix $v_1 \cdots v_{d+1}$. 
The claim follows by induction on $d$. Let $i$ be a vertex of $u$ which appears 
in $v_1 \cdots v_{d}$ and $\lam(i)\in \Gama$, $\lam(j)\in \Gamb$, with $(\alp, \bet) \in D$. On a path from  $i$ to $j$ in $v_1 \cdots v_k$ there are at most 
${\abs u}_\bet$ vertices with a label in $\Gamb$. Since ${\abs u}_\bet
= {\abs v}_\bet$, we conclude the claim. Since always $d \leq \abs L -1$, 
we obtain that there are reduced words $p,q$ such that
$puq \equiv v^{\abs L}$.

For the other direction let ${\abs u}_\alp =  {\abs v}_\alp$ for all $\alp \in L$ and   $p,q$ be reduced words such that
$puq \equiv v^k$ for some $k \in \N$. If we have $\abs p = 0$ then 
$u\equiv v$ since ${\abs u}_\alp =  {\abs v}_\alp$ for all $\alp \in L$.
Thus we have $p \equiv a p'$ for some $a \in \Gama$ and $p' \in \Gam^*$. We conclude $v \equiv a v'$ for some $v' \in \Gam^*$. This leads to 
$p'uqa \equiv (v'a)^k$. By induction on the length of $p$ we obtain 
$u \approx (v'a) \approx (av') \equiv v$. Hence the result.  
  \end{proof}

\begin{corollary}\label{cor:lisa}
Let $G= G(L,I; (\Ga)_{\alp\in L}))$ 
be a graph product and $\GG$ as above
be such that  the \WP of all node groups $\Ga$ 
is in 
\logs. 
Then the following problem can be solved in \logs. 
Input: Cyclically reduced  words $u$, $v \in \Gam^*$. 
Problem: 
Do we have  $u \approx v$?
\end{corollary}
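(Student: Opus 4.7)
The plan is to reduce \prref{cor:lisa} directly to \prref{prop:duboc}, which characterises $u \approx v$ (on cyclically reduced inputs) by the conjunction of two conditions: (a) $\abs{u}_\alp = \abs{v}_\alp$ for every $\alp \in L$, and (b) the existence of reduced words $p, q$ with $puq \equiv v^{\abs L}$.

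Condition (a) is routine: for each $\alp \in L$, compute the $\alp$-length of $u$ and of $v$ by counting the vertices of the dependence graph with a label in $\Gama$, and compare. Since $\abs L$ is a constant, this amounts to a bounded number of \logs computations. For condition (b), I first note that cyclic reducedness of $v$ implies that $v^k$ is reduced for every $k \geq 1$. Indeed, by \prref{lem:charred} I only need to exclude a factor $bwb'$ with $b, b' \in \Gamb$ and $w \in I(\bet)$; such a factor cannot lie inside a single copy of $v$ (as $v$ is itself reduced), and it cannot straddle a copy boundary either, since that would require two distinct vertices of $v$, one minimal and one maximal, both carrying $\Gamb$-labels, contradicting cyclic reducedness. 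Consequently $v^{\abs L}$ is reduced, and its length is the polynomial $\abs L \cdot \abs v$.

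Now I apply \prref{prop:patmat} with pattern $u$ and text $t := v^{\abs L}$. Any letter of $t$ is obtainable in \logs from $v$ and its position modulo $\abs v$, so by \logs-composability one may pass $t$ as input to the pattern-matching procedure. The proposition decides in \logs whether there exist $x, y$ with $t \equiv xuy$. Any such $x$ and $y$ are subgraphs of the reduced dependence graph $D(v^{\abs L})$, hence are themselves reduced, and thus supply the words $p, q$ required by condition (b). Conversely, if such $p, q$ exist, the pattern-matching procedure detects an occurrence. Combining conditions (a) and (b) gives the decision.

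The one step to double-check is the \logs-composability when $v^{\abs L}$ is generated on demand rather than given explicitly on the tape; this is the standard composition lemma for \logs-computable functions and is the only point where one needs to be mildly careful about the complexity bookkeeping, but it is not a serious obstacle.
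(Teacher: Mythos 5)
Your proof is correct and takes essentially the same route as the paper, whose entire argument for this corollary is the one-line observation that it follows by combining \prref{prop:duboc} (check the $\alp$-lengths, then test whether $u$ occurs as a factor of $v^{\abs L}$) with the pattern-matching procedure of \prref{prop:patmat}. The only inaccuracy is your side claim that cyclic reducedness of $v$ forces $v^k$ to be reduced: a vertex of $D(v)$ can be simultaneously minimal and maximal (e.g.\ $v=ab$ with $a,b$ from distinct, independent node groups, where $v^2\equiv aabb$ is not reduced), a case your ``two distinct vertices'' argument overlooks; this does not harm the reduction, since \prref{prop:patmat} accepts arbitrary words as text, and the degenerate case cannot occur for the connected inputs with $\abs{\al(v)}\geq 2$ on which the corollary is invoked in the conjugacy algorithm.
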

%%%%%%%%%%%%%

\begin{proof}
This is a direct consequence of \prref{prop:patmat}
and \prref{prop:duboc}.
\end{proof}

Using \prref{cor:lisa}  the proof of  
\prref{thm:conjug} is reduced to showing the following 
combinatorial proposition.

\begin{proposition}\label{prop:cot}
Let $G= G(L,I; (\Ga)_{\alp\in V}))$ be a graph product, $\GG$ as above, and let $x,y\in \GG^*$ be  cyclically reduced and connected words such that $\al(x) = \al(y)$ with $\abs{\al(x)} \geq 2$.  Then we have 
$x \sim y $ in the group $G$ \IFF $x \approx y $.
\end{proposition}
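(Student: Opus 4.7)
The direction $x\approx y\Rightarrow x\sim y$ is immediate: a single transposition $u\equiv rs \leftrightarrow sr\equiv v$ satisfies $v = \ov{r}\,u\,r$ in $G$, and conjugacy is transitive. My plan for the converse is to reduce it to Duboc's characterisation in \prref{prop:duboc}. By \prref{lem:hugo} we already have $|x|_\alpha = |y|_\alpha$ for every $\alpha\in L$, so it remains to produce some $N\in\N$ and reduced words $p,q\in\Gam^*$ with $pxq\equiv y^N$. Although \prref{prop:duboc} is stated with the specific exponent $|L|$, its sufficient-direction argument ``$puq\equiv v^k \Rightarrow u\approx v$'' goes through for any $k\in\N$, so any $N$ will do.

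Fix a reduced $z\in\Gam^*$ with $\ov{z}\,x\,z = y$ in $G$; then $y^N = \ov{z}\,x^N\,z$ in $G$ for every $N$. The first thing to establish is that $x^N$ itself is reduced for every $N\geq 1$. Suppose not; by \prref{lem:charred} there is a forbidden factor $bub'$ with $b,b'\in\Gamb$ and $u\in I(\beta)$. Since $x$ is reduced, this factor must straddle a boundary between consecutive copies of $x$; if it spans two or more boundaries, then $u$ contains a whole copy of $x$, so every letter of $\al(x)$ commutes with $\beta$, and $\beta\in\al(x)$ (witnessed by $b$) together with irreflexivity of $I$ gives a contradiction. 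Hence $b$ lies in some copy $i$ and $b'$ in the next copy $i+1$, which forces $b$ to be a maximal and $b'$ a minimal vertex of $D(x)$, both labelled in $\Gamb$. Cyclic reducedness then identifies them as one and the same vertex of $D(x)$, so $|x|_\beta = 1$ with the $\beta$-vertex isolated in $D(x)$, contradicting connectedness of $\al(x)$ together with $|\al(x)|\geq 2$.

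Once $x^N$ is known to be reduced, the reduction of $\ov{z}\,x^N\,z$ (length $N|x|+2|z|$) to $y^N$ (length $N|x|$) removes exactly $2|z|$ letters. Each step of the rewriting procedure picks two Hasse-adjacent letters of a common $\Gamb$ and removes at least one of them, so there are at most $2|z|$ reduction steps, and hence at most $4|z|$ letters of $\ov{z}\,x^N\,z$ are touched throughout the reduction. At most $4|z|$ of these lie in $x^N$, occupying at most $4|z|$ of its $N$ copies of $x$. Choosing $N>4|z|$ guarantees that some copy, say the $i$-th, is never touched. By confluence (\prref{prop:dieloh}) we may organise the reduction so that this copy is avoided altogether, and it then survives in $D(y^N)$ with its original labels and internal arcs, i.e.\ as a sub-dependence graph isomorphic to $D(x)$.

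The main obstacle, as I see it, is promoting this embedded copy to a genuine factorisation $y^N\equiv pxq$ --- i.e.\ showing it is a \emph{convex} subgraph of $D(y^N)$. In $D(\ov{z}\,x^N\,z)$ convexity is obvious, since letters of $\ov{z}$ and of earlier copies of $x$ can only precede the chosen copy in the partial order, while $z$ and later copies can only follow it. Convexity passes to $D(y^N)$ because (i) the rewriting only deletes vertices outside the chosen copy, and (ii) a relabeling step always changes labels within a fixed $\Gamb$, so the independence relation between alphabets --- which is what determines arcs in a dependence graph --- is preserved, and no new arc can appear that would sandwich an outside vertex between two vertices of the copy. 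A linearisation then realises the embedded copy as a contiguous middle factor, giving reduced $p,q\in\Gam^*$ with $pxq\equiv y^N$, and \prref{prop:duboc} finishes the proof.
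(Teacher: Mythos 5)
Your argument is correct, but it follows a genuinely different route from the paper. The paper fixes a conjugator $z$ of \emph{minimal} length with $xz=zy$, shows that minimality together with cyclic reducedness and connectedness forces $xz$ to be reduced, hence $xz\equiv zy$ as dependence graphs, and then applies the Levi Lemma for traces to peel off a transposition and induct on $\abs z$. You instead pass to a high power: you prove $x^N$ and $y^N$ stay reduced (your step here -- a forbidden factor $bub'$ straddling copy boundaries forces a maximal and a minimal $\Gamma_\beta$-vertex in $D(x)$, excluded by cyclic reducedness plus connectedness -- is sound and is essentially the same combinatorial fact the paper exploits, just deployed differently), then count that any maximal run of the rewriting procedure on $\ov z\,x^N z$ performs at most $2\abs z$ steps and touches at most $4\abs z$ vertices, so for $N>4\abs z$ some copy of $x$ survives untouched as a convex subgraph of $D(y^N)$, yielding reduced $p,q$ with $pxq\equiv y^N$; Duboc's criterion (\prref{prop:duboc}), whose sufficiency direction indeed works for an arbitrary exponent as you note, then gives $x\approx y$. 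Your convexity argument is fine: the partial order only shrinks under the rewriting and relabelling stays inside a fixed $\Gamma_\beta$, so arcs are unaffected. What the two approaches buy: the paper's Levi-lemma induction is shorter, self-contained, and implicitly bounds the number of transpositions by $\abs z$; your power trick leans on \prref{prop:duboc} (which the paper needs anyway for \prref{cor:lisa}) and on the standard fact that convex subsets of dependence graphs are factors, at the cost of the bookkeeping about touched vertices and the exponent $N$ depending on $\abs z$ rather than on $\abs L$. Both are valid; if you write yours up in full, make explicit that $y^N$ is reduced by the same argument as $x^N$ (you use this when counting that exactly $2\abs z$ vertices are deleted) and that \prref{lem:hugo} supplies $\abs{x}_\alp=\abs{y}_\alp$ before Duboc's criterion is invoked.
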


\begin{proof}
Let $x \sim y $. We have to show $x \approx y $.
Choose some reduced word $z \in \Gam^*$ of minimal length such that 
$xz = zy$ in $G$. By \prref{lem:gunnar} we have $\al z \sse \al x$. 
Assume that $xz$ was not reduced. Then we have $x\equiv x'a$ and $z \equiv a'z'$
such that $[aa'] \in \Ga$ for some $\alp \in L$. Since $a'$ is minimal in $z$ and $\al z \sse \al x$ we conclude that $a'$ is also minimal in $x$.
Actually there is a minimal vertex in $x'$ with label $a'$, because $x$ is connected and $\abs{\al(x)} \geq 2$. This implies $x= a'x''a$ which  
is a contradiction since $x$ is cyclically reduced. 
 
Thus, $xz$ is reduced and therefore $zy$ is reduced, too. This implies
$xz \equiv zy$ because $xz = zy$ in $G$. We can apply the Levi Lemma for traces \cite[Thm. 3.2.2]{dr95}. 
It yields the existence of $p,r,s,q \in \Gam^*$ such that 
$x \equiv pr, z\equiv sq, z\equiv ps, y \equiv rq$ with 	$r$ and $s$ independent.
If $\abs p = 0$ or $\abs q = 0$ then $x=y$, hence we may assume 
$\abs s < \abs z$. Moreover, $rps \equiv rsq \equiv s rq$ because $rs \equiv sr$. Thus, by induction we obtain $rp \approx rq \equiv y$. Now, $rp $ and $x \equiv pr$ are transposed, hence  $x \approx y $.
\end{proof}

We now have all the ingredients to describe the algorithm which proves 
\prref{thm:conjug} (and \prref{cor:conjug}). \\

{\noindent \bf The Algorithm for solving conjugacy in a graph product.} \\
{\bf Input:}  $u$, $v \in \GG^*$. {\bf Question:} $u \sim v $ in $G$? 
\begin{enumerate}
\item Compute $u$, $v$ in reduced form using \prref{thm:compred}.
\item Compute $u$, $v$ in cyclically reduced form using \prref{lem:fred}.
\item Reduce to the case that $u$, $v$ are cyclically reduced and connected using \prref{lem:gunnar}.
\item Compute $\al(u)$, $\al(v)$. If $\al(u)\neq \al(v)$ then 
$u \not\sim v $ in $G$. Hence without restriction, $\al(u) = \al(v)$.
\item If $\abs{\al(u)} \leq 1$ then $u \sim v $ in $G$ \IFF $u$ and 
$v$ are conjugated in the corresponding node group. Hence without restriction, $\abs{\al(u)} \geq 2$.
\item Since $\abs{\al(u)} \geq 2$, we have now by \prref{prop:cot} that $u \sim v $ in $G$ \IFF $u \approx v$. Decide  $u \approx v$ using \prref{cor:lisa}.
\end{enumerate} 

\section{Conclusion}\label{sec:concl}
The paper shows transfer results for the \logs complexity of important
group-theoretical decision problems from node groups to graph products. 
This concerns the word problem, computing geodesics,  and the \CP. 
The first two results were  known for RAAGs (graph groups) before, but
not for graph products in general. The earlier proof for RAAGs relied on  geometry and linear representations and these methods are not available for 
graph products, in general. The present proof is purely combinatorial. 
Our results  concerning the \CP are new even for RAAGs, and they  go clearly far beyond that. Our results also support a conjecture that 
a graph product of linear groups is again linear. 
A proof of this  conjecture might proceed using a similar induction scheme as used here, but this is highly speculative and not in the scope of 
purely combinatorial methods. 
 
An  interesting question is whether analogous transfer results hold 
in complexity classes below \logs. 
The main obstacle is the \WP for free groups in two generators.
The precise complexity of this  \WP is  a long standing open question in algorithmic 
group theory. 
%Recall that the \WP  for free groups is $\mathsf{NC}^1$-hard \cite{Robinson93phd}. 
%So the best we can hope for is $\mathsf{NC}^1$-type results. 
%We are optimistic that there is  an $\mathsf{NC}^1$-reduction, say for the \WP, 
%{}from the \WP of a graph product to the \WP in the node groups and 
%the \WP in free groups. But this has not been worked yet. \vdd{Ist es womöglich trivial?}

A promising line of future research is  to  extend the results beyond graph products. 
For example, the results in \prref{sec:amal} can easily be extended from direct products to semi-direct products. But this is only the first step. 

%\section{Appendix}\label{sec:apx}

%Compute the vertex set 
%$ H \bs \set{gC}{g \in {}G} = H \bs {}G / C. $
%
%{\bf Claim:} The inclusion $P \sse {}G$ induces a
%bijection: 
%$$ P/A \to H \bs{}G/C,\; fA \mapsto HfC$$
%{\bf Proof of Claim:}
%Since ${}G = H\cdot P$,  it is surjective. 
%
%For $g\in G$ let $f_g \in  Hg \cap P$. Note that $f_g$ is unique.
%
%Define  $HgC \mapsto f_gA$.
% It is enough to show that $f_gA$ is well-defined. 
% 
% 
%Let  $h \in H$, $a \in A$,  and $b \in B$ and  $g' = hgab \in HgC$. 
%We have to show that $f_{g'} \in f_gA$. 
%
%Since $H$ is normal and $B \sse H$, we have $g' \in gab H \sse ga H = Hga= Hf_ga$.
%Hence $f_{g'} = f_ga \in f_gA$.
%
%
%{Proof that the Algorithm is correct I (talk: skip slide)}
%
%
%We use the lemma on trace rewriting in order to conclude that 
%$w$ is {\bf not } geodesic \IFF there is a node $\bet \in V$ and a factor $bub'$ 
%with $b,b' \in \GG_\bet$ such that 
%$u \in I(\bet)$. Here and in the following 
%$$I(\bet)= \left(\bigcup \set{G_\alp}{(\alp, \bet) \in I}\right)^*.$$
% 
%Let $\alp \in V$ be a node. 
%We say that a word $w \in \GG^*$ is $\alp$-geodesic, if the number of letters from 
%$\GG_\alp$ is minimal w.r.t.{} all words which represent the same element in 
%$G$. 

%\bibliographystyle{abbrv}
%\bibliography{../TRACES/traces}
\newcommand{\Ju}{Ju}\newcommand{\Ph}{Ph}\newcommand{\Th}{Th}\newcommand{\Ch}{Ch}\newcommand{\Yu}{Yu}\newcommand{\Zh}{Zh}\newcommand{\St}{St}

\end{document}